\documentclass{article}

\PassOptionsToPackage{numbers, compress}{natbib}
\usepackage{natbib}
\usepackage[verbose=true,letterpaper]{geometry}

\makeatletter

\AtBeginDocument{
  \newgeometry{
    textheight=9in,
    textwidth=5.5in,
    top=1in,
    headheight=12pt,
    headsep=25pt,
    footskip=30pt
  }
  \@ifpackageloaded{fullpage}
    {\PackageWarning{neurips_2026}{fullpage package not allowed! Overwriting formatting.}}
    {}
}

\renewcommand{\normalsize}{%
  \@setfontsize\normalsize\@xpt\@xipt
  \abovedisplayskip      7\p@ \@plus 2\p@ \@minus 5\p@
  \abovedisplayshortskip \z@ \@plus 3\p@
  \belowdisplayskip      \abovedisplayskip
  \belowdisplayshortskip 4\p@ \@plus 3\p@ \@minus 3\p@
}
\normalsize
\renewcommand{\small}{%
  \@setfontsize\small\@ixpt\@xpt
  \abovedisplayskip      6\p@ \@plus 1.5\p@ \@minus 4\p@
  \abovedisplayshortskip \z@  \@plus 2\p@
  \belowdisplayskip      \abovedisplayskip
  \belowdisplayshortskip 3\p@ \@plus 2\p@   \@minus 2\p@
}

\renewcommand{\footnotesize}{\fontsize{8pt}{9.5pt}\selectfont}
\renewcommand{\large}{\@setfontsize\large\@xiipt{14}}
\renewcommand{\Large}{\@setfontsize\Large\@xivpt{16}}
\renewcommand{\LARGE}{\@setfontsize\LARGE\@xviipt{20}}
\renewcommand{\huge}{\@setfontsize\huge\@xxpt{23}}
\renewcommand{\Huge}{\@setfontsize\Huge\@xxvpt{28}}

\providecommand{\section}{}
\renewcommand{\section}{%
  \@startsection{section}{1}{\z@}%
                {-2.0ex \@plus -0.5ex \@minus -0.2ex}%
                { 1.5ex \@plus  0.3ex \@minus  0.2ex}%
                {\large\bf\raggedright}%
}
\providecommand{\subsection}{}
\renewcommand{\subsection}{%
  \@startsection{subsection}{2}{\z@}%
                {-1.8ex \@plus -0.5ex \@minus -0.2ex}%
                { 0.8ex \@plus  0.2ex}%
                {\normalsize\bf\raggedright}%
}
\providecommand{\subsubsection}{}
\renewcommand{\subsubsection}{%
  \@startsection{subsubsection}{3}{\z@}%
                {-1.5ex \@plus -0.5ex \@minus -0.2ex}%
                { 0.5ex \@plus  0.2ex}%
                {\normalsize\bf\raggedright}%
}
\providecommand{\paragraph}{}
\renewcommand{\paragraph}{%
  \@startsection{paragraph}{4}{\z@}%
                {1.5ex \@plus 0.5ex \@minus 0.2ex}%
                {-1em}%
                {\normalsize\bf}%
}
\providecommand{\subparagraph}{}
\renewcommand{\subparagraph}{%
  \@startsection{subparagraph}{5}{\z@}%
                {1.5ex \@plus 0.5ex \@minus 0.2ex}%
                {-1em}%
                {\normalsize\bf}%
}

\newlength{\@neuripsabovecaptionskip}
\newlength{\@neuripsbelowcaptionskip}
\renewenvironment{table}
  {\setlength{\abovecaptionskip}{\@neuripsbelowcaptionskip}%
   \setlength{\belowcaptionskip}{\@neuripsabovecaptionskip}%
   \@float{table}}
  {\end@float}

\renewcommand{\footnoterule}{\kern-3\p@ \hrule width 12pc \kern 2.6\p@}
\def\@listi{\leftmargin\leftmargini}
\def\@listii{\leftmargin\leftmarginii
              \labelwidth\leftmarginii
              \advance\labelwidth-\labelsep
              \topsep 2\p@ \@plus 1\p@ \@minus 0.5\p@
              \parsep 1\p@ \@plus 0.5\p@ \@minus 0.5\p@
              \itemsep\parsep}
\def\@listiii{\leftmargin\leftmarginiii
               \labelwidth\leftmarginiii
               \advance\labelwidth-\labelsep
               \topsep 1\p@ \@plus 0.5\p@ \@minus 0.5\p@
               \parsep\z@
               \partopsep 0.5\p@ \@plus 0\p@ \@minus 0.5\p@
               \itemsep\topsep}
\def\@listiv{\leftmargin\leftmarginiv
              \labelwidth\leftmarginiv
              \advance\labelwidth-\labelsep}
\def\@listv{\leftmargin\leftmarginv
             \labelwidth\leftmarginv
             \advance\labelwidth-\labelsep}
\def\@listvi{\leftmargin\leftmarginvi
              \labelwidth\leftmarginvi
              \advance\labelwidth-\labelsep}

\providecommand{\maketitle}{}
\renewcommand{\maketitle}{%
  \par
  \begingroup
    \renewcommand{\thefootnote}{\fnsymbol{footnote}}
    \renewcommand{\@makefnmark}{\hbox to \z@{$^{\@thefnmark}$\hss}}
    \long\def\@makefntext##1{%
      \parindent 1em\noindent
      \hbox to 1.8em{\hss $\m@th ^{\@thefnmark}$}##1
    }
    \thispagestyle{empty}
    \@maketitle
    \@thanks
    \@notice
  \endgroup
  \let\maketitle\relax
  \let\thanks\relax
}

\newcommand{\@toptitlebar}{
  \hrule height 4\p@
  \vskip 0.25in
  \vskip -\parskip%
}
\newcommand{\@bottomtitlebar}{
  \vskip 0.29in
  \vskip -\parskip
  \hrule height 1\p@
  \vskip 0.09in%
}

\providecommand{\@maketitle}{}
\renewcommand{\@maketitle}{%
  \vbox{%
    \hsize\textwidth
    \linewidth\hsize
    \vskip 0.1in
    \@toptitlebar
    \centering
    {\LARGE\bf \@title\par}
    \@bottomtitlebar
    \def\And{%
      \end{tabular}\hfil\linebreak[0]\hfil%
      \begin{tabular}[t]{c}\bf\rule{\z@}{24\p@}\ignorespaces%
    }
    \def\AND{%
      \end{tabular}\hfil\linebreak[4]\hfil%
      \begin{tabular}[t]{c}\bf\rule{\z@}{24\p@}\ignorespaces%
    }
    \begin{tabular}[t]{c}\bf\rule{\z@}{24\p@}\@author\end{tabular}%
    \vskip 0.3in \@minus 0.1in
  }
}

\newcommand{\ftype@noticebox}{8}
\newcommand{\@noticestring}{Preprint.}
\newcommand{\@notice}{%
  \enlargethispage{2\baselineskip}%
  \@float{noticebox}[b]%
    \footnotesize\@noticestring%
  \end@float%
}

\renewenvironment{abstract}%
{%
  \vskip 0.075in%
  \centerline%
  {\large\bf Abstract}%
  \vspace{0.5ex}%
  \begin{quote}%
}
{
  \par%
  \end{quote}%
  \vskip 1ex%
}

\makeatother

\usepackage[utf8]{inputenc}
\usepackage[T1]{fontenc}
\usepackage{hyperref}
\usepackage{booktabs}
\usepackage{microtype}

\hypersetup{
  hidelinks,
  hypertexnames=false,
  pdftitle={Quantum Speedups for Log-Concave Sampling from Local Structure},
  pdfauthor={Chenghua Liu, Qisheng Wang, Zhengfeng Ji}
}

\usepackage{amsmath,amsthm,amssymb}
\usepackage{algorithm}
\usepackage{algpseudocode}
\usepackage{cleveref}
\usepackage{float}
\usepackage{bm}
\usepackage{graphicx}

\newtheorem{theorem}{Theorem}
\newtheorem{lemma}[theorem]{Lemma}
\newtheorem{proposition}[theorem]{Proposition}
\newtheorem{corollary}[theorem]{Corollary}
\newtheorem{definition}{Definition}
\newtheorem{remark}[theorem]{Remark}

\AddToHook{env/theorem/begin}{\crefalias{theorem}{theorem}}
\AddToHook{env/lemma/begin}{\crefalias{theorem}{lemma}}
\AddToHook{env/proposition/begin}{\crefalias{theorem}{proposition}}
\AddToHook{env/corollary/begin}{\crefalias{theorem}{corollary}}
\AddToHook{env/remark/begin}{\crefalias{theorem}{remark}}

\newcommand{\microspace}{\mspace{.5mu}}
\newcommand{\paren}[1]{\left( #1 \right)}
\newcommand{\parens}[1]{( #1 )}
\newcommand{\ket}[1]{\ensuremath{\lvert\microspace #1
    \microspace\rangle}}
\newcommand{\bra}[1]{\ensuremath{\langle\microspace #1
    \microspace\rvert}}
\renewcommand{\d}{\operatorname{d}}
\newcommand{\polylog}{\operatorname{polylog}}
\newcommand{\TV}{\operatorname{TV}}
\newcommand{\gap}{\operatorname{gap}}
\renewcommand{\bar}[1]{\overline{#1}}
\def\cO{\mathcal{O}}
\newcommand{\R}{\mathbb{R}}
\newcommand{\E}{\mathbb{E}}

\title{Quantum Speedups for Log-Concave Sampling from Local Structure}

\author{%
  Chenghua Liu\textsuperscript{1,2}
  \qquad
  Qisheng Wang\textsuperscript{3}
  \qquad
  Zhengfeng Ji\textsuperscript{4}\\[2mm]
  \textsuperscript{1}Institute of Software, Chinese Academy of Sciences, Beijing, China\\
  \textsuperscript{2}University of Chinese Academy of Sciences, Beijing, China\\
  \textsuperscript{3}School of Computer Science, Shanghai Jiao Tong University, Shanghai, China\\
  \textsuperscript{4}Department of Computer Science and Technology, Tsinghua University, Beijing, China
}

\begin{document}

\maketitle

\begin{abstract}
For a convex function \(f \colon \mathbb{R}^d \to \mathbb{R}\), the problem of sampling from a distribution proportional to \(e^{-f(x)}\) is called log-concave sampling.
In many practical scenarios, the function $f(x)$ turns out to admit a local decomposition $f(x) = \sum_{a=1}^R \psi_a(x_{S_a})$.
In this paper, we consider log-concave sampling using \textit{local} queries, i.e., evaluation and gradient queries to each clause $\psi_a(\cdot)$, which can be computationally much cheaper than the queries to $f(x)$ itself.
We show that if each coordinate appears in only a small number of clauses, there is a quantum algorithm for strongly log-concave sampling using $\widetilde{O}(\sqrt{\kappa}d)$ local queries, where $\kappa$ is the condition number.
This improves the prior best classical result $\widetilde{O}(\kappa d)$ due to \hyperlink{cite.ALZ24}{Ascolani, Lavenant, and Zanella (\textit{Ann.\ Probab.}\ 2026)} and quantum result $\widetilde{O}(\sqrt{\kappa} d^2)$ implied by \hyperlink{cite.CLL22}{ Childs et al.\ (NeurIPS 2022)}.\footnote{In \cite{CLL22}, their approach uses $\widetilde{O}(\sqrt{\kappa}d)$ \textit{global} queries, i.e., evaluation and gradient queries to $f(x)$, and thus implies an approach using $\widetilde{O}(\sqrt{\kappa}d^2)$ local queries since each global query requires $\Theta(d)$ local queries to implement even if each coordinate appears in $O(1)$ clauses.}
Our quantum sampler applies to a broad class of locally structured models from statistical computing and machine learning, with representative examples including Gaussian Markov random fields, finite-element latent Gaussian models, and sparse generalized linear models.
These results demonstrate that local structure is not merely an implementation detail, but a quantum algorithmic resource for high-dimensional sampling.
\end{abstract}

\section{Introduction}
Sampling from log-concave distributions is a fundamental primitive in statistics, optimization, machine learning, and statistical physics.
In this paper, we study quantum algorithms for sampling from a strongly log-concave target distribution on \(\R^d\), with density \(\pi(x)=Z_f^{-1}e^{-f(x)}\), where \(Z_f\) is the normalizing constant.
We assume that the potential \(f:\R^d\to\R\) is \(L\)-smooth and \(\lambda\)-strongly convex, that is,
\begin{equation}
\label{eq:intro-strongly-convex}
\|\nabla f(x)-\nabla f(y)\|\le L\|x-y\|,
\qquad
f(y)-f(x)\ge \langle \nabla f(x),y-x\rangle+\frac{\lambda}{2}\|y-x\|^2
\end{equation}
for all $x,y\in\R^d$. We write
$\kappa:=\frac{L}{\lambda}$
for the condition number. Equivalently, $\pi$ is a smooth strongly log-concave distribution.

Existing quantum algorithms for continuous sampling typically assume coherent global access to the target potential, specifically via evaluation queries to $f$ or gradient queries to $\nabla f$ \cite{CLL22,CCH23,OLMW24,OLMW25,LDCL26}. The same global query model is also standard in quantum algorithms for continuous optimization \cite{vAGGdW20,ZLL21,LZ22,ZZF24,CLW25,GZL25,LS25}. We refer to these queries as \emph{global} because a single application requires computing the complete scalar potential $f(x)$ or the full gradient vector $\nabla f(x)$, both fundamentally dependent on the entire $d$-dimensional configuration.

\begin{definition}[Quantum global query]
For a potential \(f:\R^d\to\R\), the \emph{quantum global evaluation oracle} is the unitary
\[
\cO_f\ket{x,z}
=
\ket{x,z+f(x)},
\]
where \(x\in\R^d\) and \(z\in\R\). The corresponding \emph{quantum global gradient oracle} is the unitary
\[
\cO_{\nabla f}\ket{x,w}
=\ket{x,w+\nabla f(x)},
\]
where \(x\in\R^d\) and \(w\in\R^d\).
\end{definition}

While widely adopted and analytically convenient, the global query model is too coarse to capture a basic computational feature of many statistical targets: their dependence structure is often local.
By locality, we mean that changing one coordinate, or a small block of coordinates, affects only a small part of the model---for example, only a few conditional interactions, neighboring variables, or local factors in the potential.
This form of locality has long been central in statistical modeling and computation.
It goes back to classical work on conditionally specified lattice systems and Gibbs distributions, where local conditional interactions serve as the basic modeling primitive \cite{Bes74,GG84}.
It later became foundational in Bayesian computation through Gibbs sampling and data augmentation methods \cite{GS90,TW87}, and subsequent work showed that scan order, blocking, and parameterization are closely tied to the underlying local dependence structure of the target distribution \cite{LWK95,RS97}.
A canonical modern example is provided by Gaussian Markov random fields: the target is Gaussian, hence log-concave, and its sparse precision matrix encodes which variables interact conditionally.
In this setting, sparse-matrix methods enable fast sampling, conditional simulation, and efficient block updates in Markov chain Monte Carlo (MCMC) \cite{Rue01}.
The same idea underlies the broader latent Gaussian literature, where sparse conditional dependence and sparse precision structure form the computational backbone of applications in spatial statistics, disease mapping, and spatio-temporal modeling \cite{RH05,RMC09,LRL11}.
A parallel theme appears in large-scale optimization, where this modeling locality has a direct computational analogue.
Nesterov motivates coordinate descent by observing that, in high-dimensional regimes, even simple full-dimensional vector operations can be prohibitively expensive, so one should exploit partial updates whenever they are substantially cheaper \cite{Nes12}.
This perspective has been developed and formalized in a large literature on coordinate and block-coordinate methods for smooth and composite objectives \cite{Tse01,TY09,BT13,RT14,Wri15}.
There, the basic computational unit is not a full gradient evaluation, but a coordinate gradient, a blockwise objective change, or a local update.
The resulting complexity guarantees are meaningful precisely because these local quantities can often be computed at much lower cost than global operations.

In summary, in both statistical sampling and optimization, local computational structure is not merely an implementation detail, but a central algorithmic resource.
By contrast, the standard quantum global query model completely obscures this structure.
If the target is accessed exclusively through global queries, the inherent cost distinction between a full-dimensional update and a local update is rendered invisible by design.
This structural disconnect naturally motivates the central question of this paper:

\begin{center}
\textit{Can quantum algorithms accelerate log-concave sampling by exploiting local structure?}
\end{center}

We answer this question in the affirmative for a broad and practically important class of structured log-concave targets.
The key point is that, in many such models, the relevant computational primitive is not a full evaluation of \(f\) or \(\nabla f\), but rather the effect of modifying a single coordinate while keeping the others fixed.
To capture this feature, we assume that the potential admits a local decomposition
\begin{equation} \label{eq:def-f}
f(x)=\sum_{a=1}^R \psi_a(x_{S_a}),\qquad S_a\subseteq[d],
\end{equation}
where each coordinate appears in only a small number of clauses.

In this paper, we mainly focus on those potentials with a \textit{constant bounded occurrence} local decomposition, where each coordinate appears in a constant number of clauses.
Far from being a narrow special case, this decomposition captures a common structural pattern in log-concave sampling problems arising from large-scale statistical computing and machine learning: many models of practical interest are built from sparse or local interactions, so changing a single coordinate affects only a small number of terms in the objective.
This situation arises naturally in Gaussian Markov random fields, conditional autoregressive models, and finite-range Gaussian lattice models, where sparse precision structure encodes local conditional dependence and variables interact only through local neighborhoods \cite{Bes74,Rue01,RH05}.
It also appears in latent Gaussian models for spatial statistics and spatio-temporal modeling, particularly in finite-element constructions based on stochastic partial differential equations (SPDEs) with locally supported basis functions and in spatial generalized linear mixed models with local observation structure \cite{RMC09,EMR09,LRL11}.
Banded-precision Gaussian models, such as state-space and Gauss--Markov models of bounded Markov order, provide another common source of log-concave targets with local quadratic potentials \cite{DAB19}.
Finally, generalized linear model posteriors and regularized empirical-risk objectives with sparse local design fit the same pattern in bounded-degree regimes, where each loss depends on few features and each feature enters few losses; this is the local-computation regime exploited by sparse coordinate methods \cite{FHT10,LLX15,ZX17}.
These examples motivate an access model in which local conditional computation is exposed directly, rather than mediated solely through global evaluations.

We now introduce, to our knowledge, the \emph{first} quantum local query model for continuous sampling that is designed to capture local structure. Rather than taking global evaluation or gradient queries as primitive, this model provides coherent access to the individual clauses in a local decomposition of \(f\).

\begin{definition}[Quantum local query]
\label{def:factor-oracle}
Suppose that the potential admits a local decomposition
$f(x)=\sum_{a=1}^R \psi_a(x_{S_a})$.
Without loss of generality, we assume that $S_a \in \binom{[d]}{k}$ for all $S_a$, i.e., each $S_a$ involves at most $k$ coordinates.
The \emph{quantum clause oracle} is the unitary
\[
\cO^{\mathrm{loc}}_f\ket{a,t,z}
=
\ket{a,t,\,z+\psi_a(t)},
\]
where \(a\in [R]\), \(t\in\R^k\), and \(z\in\R\). The corresponding \emph{clause-gradient oracle} is the unitary
\[
\cO^{\mathrm{loc}}_{\nabla f}\ket{a,t,w}
=
\ket{a,t,\,w+\nabla\psi_a(t)},
\]
where \(a\in [R]\), \(t\in\R^k\), and \(w\in\R^{ k}\).
\end{definition}

A local query acts on a \(k\)-dimensional clause
\(\psi_a:\R^k\to\R\), or its \(k\)-dimensional gradient, rather than on the full \(d\)-dimensional potential or gradient.
Thus, when \(k\) is small, the quantum clause and clause-gradient oracles can be implemented much more efficiently than their global counterparts.
Moreover, across many practical models, the local arity \(k\) is much smaller than \(d\), and often constant;
concrete instances are discussed in \cref{para:applications}.
This access model makes the local computational primitive of classical local-update methods visible at the level of quantum query complexity: a global evaluation must aggregate all clauses, whereas a single-coordinate conditional update only depends on the clauses involving that coordinate.
For the locally structured targets considered here, this makes local updates substantially cheaper than global evaluations or full gradient computations.

In this paper, we show that local queries provide the suitable computational interface for exploiting locality in quantum sampling.
They allow local structure to be converted into a genuine quantum algorithmic advantage, leading to an efficient quantum sampler for strongly log-concave distributions with local decompositions.

\subsection{Our results}
Our main result is an efficient quantum sampler in the local query model.
\begin{theorem}
    For any $L$-smooth and $\lambda$-strongly convex potential $f \colon \mathbb{R}^d \to \mathbb{R}$ with a constant bounded occurrence local decomposition, there is a quantum algorithm that produces a sample from a distribution that is $\varepsilon$-close in total variation (TV) distance to $\pi(x) \,\propto\, e^{-f(x)}$ using $\widetilde{O}(\sqrt{\kappa}d)$\footnote{We use $\widetilde O\parens{f}$ to represent
    $O \paren{f \cdot \polylog\parens{d,\kappa,1/\varepsilon}}$ throughout this paper
    to suppress polylogarithmic factors.} local queries, or $\widetilde{O}(\sqrt{\kappa d})$ queries when the initial distribution is warm.\footnote{
Here ``warm'' means that an initial quantum sample state for a distribution \(\mu_0\) is available, where \(\mu_0\) is constant-warm with respect to \(\pi\), i.e., \(\mathrm d\mu_0/\mathrm d\pi=O(1)\). We refer to this as the warm-start setting.
}
\end{theorem}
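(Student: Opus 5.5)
The plan is to quantize a classical coordinate-wise Markov chain whose spectral gap is known, using quantum walk / quantum annealing techniques in the style of \cite{CLL22}. The chain is the random-scan Gibbs sampler, or a Metropolis-within-Gibbs variant, analyzed by Ascolani, Lavenant, and Zanella \cite{ALZ24}.

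The point of locality is this. One step of a coordinate update at coordinate $i$ needs only the conditional potential $x_i\mapsto \sum_{a: i\in S_a}\psi_a(x_{S_a})$. Under constant bounded occurrence this costs $O(1)$ local queries. \cite{ALZ24} show that random-scan Gibbs for an $L$-smooth, $\lambda$-strongly convex $f$ has spectral gap $\Omega(1/(\kappa d))$ (in $L^2(\pi)$, via conductance or $s$-conductance). So classically one gets $\widetilde O(\kappa d)$ steps from a warm start. We aim to convert this to $\widetilde O(\sqrt{\kappa d})$ quantum steps and then remove the warm start by annealing.

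\textbf{Step 1 (implementable reversible kernel).} Exact one-dimensional conditional sampling is not available from evaluation queries. We therefore replace it by a local proposal with an accept/reject step. One option is to propose $x_i'$ from a Gaussian centered at a gradient step of the conditional, then apply a Metropolis filter. The result is a reversible kernel $P$ with stationary distribution $\pi$, and each step uses $O(1)$ clause and clause-gradient queries. We must then show that the gap of $P$ is within a polylog factor of the ideal Gibbs gap $\Omega(1/(\kappa d))$. The tool is a comparison argument: one-dimensional conditionals are $\lambda$-strongly log-concave and $L$-smooth along each axis, so a well-tuned 1D Metropolis kernel (or a rejection-sampling implementation of the conditional) achieves constant acceptance. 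Rejection sampling with a Gaussian envelope is the cleaner route, since it realizes the exact conditional in expected $O(\sqrt{\kappa_i})$ tries. However, a fixed bounded-cost coherent unitary is needed, so I would truncate at a polylog number of trials and absorb the error into TV.

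\textbf{Step 2 (quantum walk speedup with warm start).} Build the Szegedy walk operator $W(P)$ from coherent implementations of $P$, using $O(1)$ local queries per application. Its phase gap is $\Omega(\sqrt{\gap(P)})=\Omega(1/\sqrt{\kappa d})$. Given a warm quantum sample $\ket{\mu_0}$, we project onto the stationary state $\ket{\pi}$ using phase estimation / fixed-point amplitude amplification. This succeeds because warmness gives $|\langle \mu_0|\pi\rangle|=\Omega(1)$. The cost is $\widetilde O(1/\sqrt{\gap})=\widetilde O(\sqrt{\kappa d})$ local queries, which gives the warm-start claim. Discretization of $\R^d$ and truncation to a ball of radius $\widetilde O(\sqrt{d/\lambda})$ cost only polylog overhead.

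\textbf{Step 3 (annealing without warm start).} Define $\pi_j\propto e^{-f(x)-\|x-x^*\|^2/(2\sigma_j^2)}$, starting from a Gaussian that is easy to prepare and anchored at an approximate minimizer. The variances $\sigma_j^2$ are increased geometrically by a factor $(1+1/\sqrt d)$, so that consecutive $\pi_j$ have constant overlap. This needs $\widetilde O(\sqrt d)$ stages. The added quadratic is itself a local (single-coordinate) term, so locality and bounded occurrence are preserved. Each stage runs Step 2 at cost $\widetilde O(\sqrt{\kappa d})$, for a total of $\widetilde O(\sqrt{\kappa}d)$. Finding $x^*$ classically by coordinate descent costs $\widetilde O(\kappa d)$, which is too much. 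Instead I would anneal jointly in the regularization, or use a quantum-accelerated coordinate method, to keep this within budget.

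\textbf{Main obstacle.} I expect the hardest step to be Step 1: certifying that the \emph{implementable} local kernel inherits the $\Omega(1/(\kappa d))$ gap of exact Gibbs uniformly along the annealing path, with $O(\polylog)$ queries per step. Error accumulation across stages (the overlap bounds controlling the amplification) is the secondary difficulty.
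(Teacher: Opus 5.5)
The genuine gap is Step~1: you never obtain a one-dimensional conditional update that costs polylogarithmically many queries and keeps the $\Omega(1/(\kappa d))$ gap. You flag this as the main obstacle but do not close it, and neither route you sketch closes it. Your outer architecture is otherwise the paper's. It uses a Szegedy walk for the random-scan Gibbs kernel, whose phase gap $\Omega(\sqrt{\gap(P^{\mathrm{GS}})})=\Omega(1/\sqrt{\kappa d})$ comes from \cite{ALZ24}. It then projects from a warm state, and otherwise uses Gaussian cooling with $\widetilde O(\sqrt d)$ constant-overlap stages.

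\emph{Gaussian-envelope route.} The envelope $e^{-\phi(u^\star)-\lambda(u-u^\star)^2/2}$ around the conditional mode $u^\star$ guarantees acceptance only $\kappa^{-1/2}$. This value is attained when the conditional is Gaussian with curvature $L$. A polylogarithmic number of trials then fails with probability about $(1-\kappa^{-1/2})^{\polylog}$, which tends to $1$ as $\kappa$ grows, so the truncation error cannot be absorbed into TV. Driving it down costs $\widetilde\Theta(\sqrt\kappa)$ trials, or $\widetilde\Theta(\kappa^{1/4})$ rounds of amplitude amplification, hence $\widetilde O(\kappa^{3/4}d)$ overall. This route also needs the conditional mode, and you never say how to locate it with local queries.

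\emph{Metropolis-within-Gibbs route.} Constant acceptance is the wrong criterion here. The fiberwise comparison gives only $\gap(P)\ge\beta\,\gap(P^{\mathrm{GS}})$, where $\beta$ is the worst-case gap of your one-dimensional kernel on a conditional. Take a gradient-step Gaussian proposal with a fixed step size $O(1/L)$, which is what constant acceptance on conditionals of curvature up to $L$ forces for a non-adaptive step. The test function $u\mapsto u$ then gives $\beta=O(1/\kappa)$ on a conditional of curvature $\lambda$. Your argument therefore yields only phase gap $\Omega(1/(\kappa\sqrt d))$ and total cost $\widetilde O(\kappa d)$, losing exactly the $\sqrt\kappa$ you want. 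Rescuing this route would require tuning the proposal to each conditional's own scale and proving a uniform one-dimensional gap. The $\Omega(1/(\kappa d))$ bound you invoke is a statement about the exact Gibbs kernel, not your Metropolized one.

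The missing idea is a sampler for the \emph{exact} one-dimensional conditional with constant acceptance at polylogarithmic query cost. The paper first locates the conditional mode by bisection on $\partial_u f_m$, starting from the interval of radius $|\partial_m f(x)|/\lambda$ around $x_m$. This takes polylogarithmically many derivative evaluations on the truncated domain (\Cref{lem:gradient-overhead-truncation}). It then rescales to a $1$-strongly convex, $\kappa$-smooth potential whose minimizer lies within $\kappa^{-1/2}$ of $0$. In that regime the envelope of \cite{CGL22} (\Cref{lem:1d-envelope}) costs $O(\log\log\kappa)$ queries and exceeds the target mass by only a universal constant factor. Fixed-point amplification then gives an $\eta$-accurate coherent conditional sampler with $O(\log(1/\eta))$ repetitions, i.e.\ $\widetilde O(|D_m|)$ local queries. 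Because the exact Gibbs update is implemented, the ideal walk's discriminant is unitarily equivalent to $P^{\mathrm{GS}}$, and the gap of \cite{ALZ24} transfers with no comparison argument.

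Your worry about needing $x^\star$ is fair; the paper simply centers the schedule at $x^\star$. It is not an obstacle, though. Classical accelerated coordinate descent reaches a center $\tilde x$ with $\|\nabla f(\tilde x)\|\lesssim\sqrt{\lambda d}$, which suffices for the overlap estimates. It does so with $\widetilde O(\sqrt\kappa\,d)$ partial derivatives, i.e.\ $\widetilde O(\Delta\sqrt\kappa\,d)$ local queries, within budget.
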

Table~\ref{tab:comparison-local-query} compares this result with representative classical and quantum samplers in the local query model for potentials with constant bounded occurrence local decompositions.
\begin{table}[htp]
\caption{
Comparison for strongly log-concave sampling in the local-decomposition setting.
Complexities are measured in local queries.
For algorithms based on global evaluations or full gradients, each such query costs \(\Theta(d)\) local queries.
We include representative recent approaches, including several quantum samplers; given the large classical MCMC literature, the comparison is not intended to be exhaustive.
A blank warm-start entry means that no polynomially improved warm-start complexity is included; \(W_2\) denotes the \(2\)-Wasserstein distance.
}
  \label{tab:comparison-local-query}
  \centering
  \small
  \resizebox{\textwidth}{!}{
\begin{tabular}{@{}llll@{}}
  \toprule
  Method & Guarantee\({}^{\star}\) & Feasible/annealed start & Warm start \\
  \midrule
    ULD-RMM \cite{SL19}
    & \(W_2\)
    & \(\widetilde O\paren{
        \tfrac{\kappa^{7/6}d^{7/6}}{\varepsilon^{1/3}}
        +
        \tfrac{\kappa d^{4/3}}{\varepsilon^{2/3}}
      }\)
    & -- \\
  RWM \cite{ALPW22}
    & TV
    & \(\widetilde O\paren{\kappa d^2}\)
    & --\\
  MALA \cite{LST20,WSC22}
    & TV
    & \(\widetilde O\paren{\kappa d^2}\)
    & \(\widetilde O\paren{\kappa d^{3/2}}\) \\
  HMC \cite{CDWY20,CGJ23}
    & TV
    & \(\widetilde O\paren{\kappa d^{23/12}}\)
    & \(\widetilde O\paren{\kappa d^{5/4}}\) \\
     Gibbs sampler \cite{ALZ24}
    & TV
    & \(\widetilde O\paren{\kappa d}\)
    & --\\
  \midrule
      Quantum inexact ULD \cite{CLL22}
    & \(W_2\)
    & \(\widetilde O\paren{
        \tfrac{\kappa^2 d^{3/2}}{\varepsilon}
      }\)
    & -- \\
  Quantum inexact ULD-RMM \cite{CLL22}
    & \(W_2\)
    & \(\widetilde O\paren{
        \tfrac{\kappa^{7/6}d^{7/6}}{\varepsilon^{1/3}}
        +
        \tfrac{\kappa d^{4/3}}{\varepsilon^{2/3}}
      }\)
    & -- \\
  Quantum finite-sum SVRG-HMC \cite{OLMW25}
    & \(W_2\)
    & \(\widetilde O\paren{
        \tfrac{Ld^{1/2}\kappa^{3/2}}{\varepsilon}
        +
        \tfrac{L^{9/8}d^{11/8}\kappa^{3/4}}{\varepsilon^{3/4}}
      }^{\dagger}\)
    & -- \\
      Quantum finite-sum CV-HMC \cite{OLMW25}
    & \(W_2\)
    & \(\widetilde O\paren{
        \tfrac{Ld^{5/4}\kappa^{9/4}}{\varepsilon^{3/2}}
      }^{\dagger}\)
    & -- \\
          Quantum MALA \cite{CLL22}
    & TV
    & \(\widetilde O\paren{\sqrt{\kappa}\,d^2}\)
    & \(\widetilde O\paren{\sqrt{\kappa}\,d^{5/4}}\) \\
      Quantum operator-level sampler \cite{LDCL26}
    & TV
    & --
    & \(\widetilde O\paren{\sqrt{\kappa}\,d^{3/2}}^{\ddagger}\) \\
  \textbf{This work}
    & TV
    & {\(\bm{\widetilde O\paren{\sqrt{\kappa}\,d}}\)}
    & {\(\bm{\widetilde O\paren{\sqrt{\kappa d}}}\)} \\
  \bottomrule
\end{tabular}
}
 \vspace{2pt}
  \begin{flushleft}
\footnotesize
\({}^{\star}\) In finite precision, TV and the \(W_2\) are related by grid-dependent factors: on a grid of diameter \(D\) and minimum spacing \(h\), \(h\sqrt{\TV}\le W_2\le D\sqrt{\TV}\).
For algorithms with only logarithmic accuracy dependence and polynomially bounded \(D,h^{-1}\), this conversion is absorbed by the \(\widetilde O\) notation.
We therefore keep the native TV or \(W_2\) guarantee of each cited work.

\({}^{\dagger}\) The bound of \cite{OLMW25} is in a finite-sum component gradient model; we count one component gradient query as one local query, so no additional \(O(d)\) conversion is applied.

\({}^{\ddagger}\) The native bound of \cite{LDCL26} is
\(\widetilde O\parens{d^{1/2}C_{\rm PI}^{1/2}}\), where \(C_{\rm PI}\) is the Poincar\'e constant.
By the Brascamp--Lieb inequality, \(C_{\rm PI}\le 1/\lambda\) for \(\lambda\)-strongly log-concave targets; after normalizing \(L=1\), this gives \(C_{\rm PI}^{1/2}\le \sqrt{\kappa}\).
\end{flushleft}
\end{table}
We also consider the more general setting where each coordinate may appear in up to \(\Delta\) clauses.
\begin{theorem}
    For any $L$-smooth and $\lambda$-strongly convex potential $f \colon \mathbb{R}^d \to \mathbb{R}$ with a local decomposition where each coordinate appears in at most $\Delta$ clauses, there is a quantum algorithm that produces a sample from a distribution that is $\varepsilon$-close in TV distance to $\pi(x) \propto\, e^{-f(x)}$ using $\widetilde{O}(\sqrt{\kappa}d\Delta)$ local queries, or $\widetilde{O}(\sqrt{\kappa d}\Delta)$ queries when the initial distribution is warm.
\end{theorem}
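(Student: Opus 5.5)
We plan to quantize the random-scan (coordinate-wise) Gibbs sampler analysed by Ascolani, Lavenant, and Zanella \cite{ALZ24}, using the spectral-gap-to-square-root speedup of quantum walks. The classical result gives, for an \(L\)-smooth, \(\lambda\)-strongly convex \(f\), a random-scan Gibbs chain whose mixing time from a feasible start is \(\widetilde O(\kappa d)\) steps. Our target is a quantum algorithm whose cost scales as \(\sqrt{\kappa d}\) times the cost of one step in the warm-start case. The first step is to extract from \cite{ALZ24} (or re-derive it) a spectral gap bound \(\gap(P)\gtrsim 1/(\kappa d)\) for the reversible random-scan Gibbs kernel \(P\), rather than only a mixing-time bound. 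By Szegedy-type quantization, the walk operator \(W(P)\) has phase gap \(\Omega(\sqrt{\gap(P)}) = \Omega(1/\sqrt{\kappa d})\). Given a warm quantum sample \(\sum_x\sqrt{\mu_0(x)}\ket{x}\), we then apply amplitude amplification, i.e.\ fixed-point or \(\pi/3\)-style projection onto the stationary state \(\ket{\pi}\) via phase estimation on \(W(P)\). The overlap \(\langle\mu_0|\pi\rangle\) is \(\Omega(1)\) by warmness, so this uses \(\widetilde O(\sqrt{\kappa d})\) applications of \(W(P)\), and measuring yields an \(\varepsilon\)-TV sample.

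The second step is to implement one coherent Gibbs update with \(\widetilde O(\Delta)\) local queries. A coordinate \(i\) is drawn uniformly. Its conditional potential \(g_i(y)=\sum_{a\ni i}\psi_a(x_{S_a}\text{ with }x_i=y)\) involves at most \(\Delta\) clauses, so \(g_i\) and \(g_i'\) cost \(\Delta\) local queries. The conditional law \(\propto e^{-g_i}\) is one-dimensional and strongly log-concave, with curvature between \(\lambda\) and \(L\). We therefore need a coherent one-dimensional sampler (preparing \(\sum_y\sqrt{\pi_i(y\mid x_{-i})}\ket{y}\) on a fine grid) whose query count is polylogarithmic in \(\kappa,d,1/\varepsilon\). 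The plan is to find the mode by \(O(\log)\) steps of bisection on \(g_i'\), which brackets the mass within a window of width \(O(\sqrt{\log(1/\varepsilon)/\lambda})\). Inside this window we use a Grover–Rudolph-style state preparation, or rejection sampling against a Gaussian envelope built from the strong-convexity and smoothness bounds. The envelope is not tight (its acceptance ratio is about \(1/\sqrt{\kappa}\)), so we plan to refine it by piecewise-quadratic bounds over \(O(\log\kappa)\) dyadic pieces, bringing the acceptance to \(\Omega(1)\). Amplitude amplification then handles any residual constant. All intermediate garbage must be uncomputed so that the state-preparation unitary \(U\colon\ket{x}\ket{0}\mapsto\ket{x}\sum_{i,y}\sqrt{P(x,(i,y))}\ket{i,y}\) is clean. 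With this \(U\), \(W(P)\) costs \(O(1)\) calls, i.e.\ \(\widetilde O(\Delta)\) local queries, giving the warm bound \(\widetilde O(\sqrt{\kappa d}\,\Delta)\).

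The third step is the cold start. We use an annealing schedule \(f_j = f + \tfrac{\sigma_j}{2}\|x-x^\star\|^2\), or a temperature schedule \(\beta_j f\), starting from a Gaussian that is trivially preparable. This keeps the decomposition local, since the added quadratic splits into \(d\) single-coordinate clauses and raises \(\Delta\) by at most one. Consecutive distributions must be \(O(1)\)-warm with respect to each other, which forces \(\widetilde O(\sqrt d)\) stages. Each stage is a quantum-walk projection at cost \(\widetilde O(\sqrt{\kappa d}\,\Delta)\), for a total of \(\widetilde O(\sqrt{\kappa}\,d\,\Delta)\). Errors accumulate additively across stages and are controlled by running each stage to accuracy \(\varepsilon/\sqrt d\). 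The mode \(x^\star\) can be approximated classically by coordinate descent using local gradient queries, at a cost of \(\widetilde O(\kappa d\Delta)\); this is too expensive. To avoid it, we plan to anneal in \(\sigma\) from a large value instead, so that only a coarse center is needed, or alternatively to use accelerated coordinate descent with \(\widetilde O(\sqrt{\kappa}d\Delta)\) cost, which matches the budget.

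The main obstacle, I expect, is the first step's quantitative gap bound combined with discretization. \cite{ALZ24} is phrased via conductance or mixing in TV from a feasible start, so we will need a genuine spectral gap \(\gtrsim 1/(\kappa d)\) for random-scan Gibbs, which we hope to obtain from their spectral-gap or log-Sobolev-type argument. We also need this gap to survive the finite-grid, truncated-window discretization of the coordinate updates. We would first try to prove the gap via the variance decomposition \(\sum_i \E[\mathrm{Var}(h\mid x_{-i})]\ge \mathrm{Var}_\pi(h)/\kappa\), which is a Brascamp–Lieb/tensorization-type inequality for strongly log-concave measures with condition number \(\kappa\). We would then handle discretization by perturbation of the Markov operator in operator norm.
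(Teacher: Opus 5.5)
Your architecture is the paper's. It uses the gap $\gap(P^{\mathrm{GS}})=\Omega(1/(\kappa d))$, which \cite{ALZ24} states directly as Corollary~3.7; your variance inequality is just its Dirichlet-form restatement, so no re-derivation is needed. It then uses the Szegedy phase gap $\Omega(1/\sqrt{\kappa d})$, a coherent one-dimensional conditional sampler costing $\widetilde O(\Delta)$ per walk step, projection from a warm start, and Gaussian cooling with one extra unary clause per coordinate. Two of your variations are harmless. Building your own $O(\log\kappa)$-query dyadic envelope instead of citing the $O(\log\log\kappa)$ construction of \cite{CGL22} is fine. For discretization you do not need the implemented walk to keep its gap: operator-norm closeness of the update isometry plus a hybrid argument over all walk calls suffices, as in the paper.

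\textbf{The step that fails as written is the cold-start stage count.} You justify $\widetilde O(\sqrt d)$ stages by requiring consecutive distributions to be $O(1)$-warm, and then convert warmness to overlap as in your warm-start step. Take $\pi_i\propto e^{-f-a_i\|x\|^2}$ with $a_{i+1}=a_i/(1+1/\sqrt d)$. Then consecutive measures are not $O(1)$-warm in either direction:
\begin{itemize}
\item $\mathrm d\pi_{i+1}/\mathrm d\pi_i\propto e^{(a_i-a_{i+1})\|x\|^2}$ is unbounded;
\item $\sup_x \mathrm d\pi_i/\mathrm d\pi_{i+1}=Z(a_{i+1})/Z(a_i)=\exp\bigl(\int_{a_{i+1}}^{a_i}\E_t\|X\|^2\,\mathrm dt\bigr)\ge\exp\bigl((a_i-a_{i+1})\,d/(L+2a_i)\bigr)=e^{\Omega(\sqrt d)}$ whenever $a_i\ge L$, and such stages occur because the schedule starts at $a_1=Ld$.
\end{itemize}
A schedule designed to enforce pointwise warmness needs relative steps of size $O(1/d)$ in that range. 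That means $\widetilde\Theta(d)$ stages and total cost $\widetilde O(\sqrt{\kappa}\,d^{3/2}\Delta)$.

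What you need, and what does hold with $\widetilde O(\sqrt d)$ stages, is constant square-root overlap $|\langle\pi_i\mid\pi_{i+1}\rangle|=\Omega(1)$. The paper proves it as follows:
\begin{itemize}
\item It writes the overlap as $e^{-\Delta(a,b)}$ with $\Delta(a,b)=\tfrac12(\Phi(a)+\Phi(b))-\Phi(\tfrac{a+b}{2})\le\tfrac{(a-b)^2}{8}\sup_{[b,a]}\Phi''$, where $\Phi=\log Z$.
\item It bounds $\Phi''(a)=\operatorname{Var}_a(\|X\|^2)\le 4d/(\lambda+2a)^2$ by the Poincar\'e inequality.
\item It checks the first transition (from $N(0,I/(2Ld))$) and the last (from $a_s=\Theta(\lambda/d)$ to $0$) separately.
\end{itemize}
You should also say explicitly that each stage is a fixed-point or $\pi/3$ amplification reflecting about both $\ket{\pi_i}$ and $\ket{\pi_{i+1}}$, via phase estimation on both walks. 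A measured projection with constant success probability would fail over $\widetilde O(\sqrt d)$ stages. This is exactly what the slowly-varying theorem of \cite{WA08} packages.

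You are right that the schedule needs a center; the paper simply translates so that $x^\star=0$ without charging for locating $x^\star$. Your accelerated coordinate descent fits within $\widetilde O(\sqrt\kappa\,d\Delta)$ local queries. So does accelerated full-gradient descent, since one full gradient costs at most $R\le d\Delta$ clause-gradient queries. The overlap argument above tolerates an approximate center $c$ with $\|\nabla f(c)\|^2=O(\lambda d)$, because $\E_a\|X-c\|^2\le 2d/(\lambda+2a)+2\|\nabla f(c)\|^2/(\lambda+2a)^2$. Your other option, a ``coarse center'' obtained by starting from strong regularization, is not justified as stated: the later, weakly regularized stages still see the centering error.
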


\paragraph{Applications.}\label{para:applications}
We spell out several representative model classes in which the local decomposition is explicit: each clause depends only on a small subset of coordinates, and each coordinate participates in only a constant number of clauses.

First, consider Gaussian Markov random fields.
Let \(G=([d],E)\), and let \(Q\succ 0\) be a sparse precision matrix with \(Q_{ij}=0\) whenever \(\{i,j\}\notin E\).
For \(\pi(x)\propto \exp\{-\frac12 x^\top Qx+h^\top x\}\) with \(\lambda I\preceq Q\preceq L I\), the potential admits the decomposition
$    f(x)
    =
    \sum_{i=1}^d
    \left(
        \frac12 Q_{ii}x_i^2 - h_i x_i
    \right)
    +
    \sum_{\{i,j\}\in E}
        Q_{ij}x_i x_j$.
Thus the clauses are unary or pairwise, and each coordinate appears only in the terms involving its neighbors.
For bounded-degree Gaussian Markov random fields, including finite-range lattice models and conditional autoregressive models in fixed spatial dimension, the occurrence number is constant \cite{Bes74,Rue01,RH05}.
Banded-precision Gaussian Markov models of bounded Markov order are a one-dimensional instance of the same principle: if \(Q_{ij}=0\) for \(|i-j|>b\) with constant bandwidth \(b\), then each coordinate interacts only with a bounded temporal neighborhood, matching the banded-operator structure studied in \cite{DAB19}.
In these settings, our theorem gives local query complexity \(\widetilde O(\sqrt{\kappa}\,d)\), or \(\widetilde O(\sqrt{\kappa d})\) from a warm start.

Second, SPDE/finite-element latent Gaussian models also lead to local decompositions.
In the SPDE approach, a latent Gaussian field is represented by coefficients \(x\in\R^d\) in a locally supported finite-element basis, and sparsity of the precision matrix comes from the fact that basis functions overlap only locally \cite{LRL11}.
Equivalently, the prior energy can be written as
$    f_{\rm prior}(x)
    =
    \sum_{e\in\mathcal T}
        \psi_e(x_{V(e)})$,
where \(e\) ranges over mesh elements and \(V(e)\) denotes the vertices of element \(e\).
For fixed-order elements in fixed spatial dimension, each element involves only a constant-size set of coefficients.
Thus, on bounded-degree meshes, the finite-element prior has a constant-occurrence local decomposition.
When the target remains smooth and strongly log-concave, our theorem applies.

Another important class is given by sparse generalized linear models and regularized empirical-risk objectives.
A typical log-concave posterior or Gibbs density has potential
$   f(\theta)
    =
    \sum_{i=1}^n
        \ell_i(a_i^\top \theta; y_i)
    +
    \sum_{j=1}^d
        r_j(\theta_j)$,
where \(\ell_i\) is a convex smooth loss and the regularizer or prior supplies strong convexity.
If the design vector \(a_i\) has support \(S_i\), then the loss term \(\ell_i(a_i^\top\theta;y_i)\) depends only on \(\theta_{S_i}\).
Thus bounded row sparsity gives low-dimensional local clauses, while bounded column sparsity ensures that each coordinate participates in only a bounded number of losses.
This is precisely the sparse-coordinate regime exploited by coordinate methods for generalized linear models and regularized empirical-risk minimization \cite{FHT10,LLX15,ZX17}.
In the bounded row- and column-sparsity setting, our theorem yields local query complexity \(\widetilde O(\sqrt{\kappa}\,d)\), or \(\widetilde O(\sqrt{\kappa d})\) from a warm start.

\subsection{Techniques}
Our algorithm follows a different route from prior quantum samplers.

Existing quantum approaches to continuous sampling largely build on global Langevin-, Hamiltonian Monte Carlo (HMC)-, or Metropolis-type dynamics.
In \cite{CLL22}, the quantum underdamped Langevin dynamics (ULD) and randomized midpoint method (RMM) algorithms use quantum zeroth-order information to implement inexact first-order Langevin updates, while the quantum Metropolis-adjusted Langevin algorithm (MALA) assumes full gradient access and applies a quantum-walk speedup to the Metropolis-adjusted Langevin chain.
Subsequent work follows a related perspective in more stochastic settings: \cite{OLMW24} uses stochastic gradients to implement quantum walks for finite-sum non-logconcave targets, and \cite{OLMW25} develops quantum variance-reduction and stochastic-gradient techniques for HMC/Langevin Monte Carlo (LMC)-type samplers.
The operator-level approach of \cite{LDCL26} is different from these methods: it encodes continuous Langevin-type operators through the Witten Laplacian and uses singular-value transformation, but still treats the target through global potential or gradient information.
In contrast, our approach starts from the local conditional-update structure exposed by the local query model, mirroring the locality long exploited by classical Gibbs sampling and coordinate-based methods.

The local query model suggests a different quantum primitive: instead of building on a global diffusion or Metropolis proposal, we coherently implement the one-coordinate Gibbs update.
For a fixed coordinate \(m\), freezing \(x_{-m}\) reduces the update to the one-dimensional conditional density
\[
\pi_m(u\mid x_{-m})\,\propto\,
\exp\!\left(-f_m(u;x_{-m})\right),
\qquad
f_m(u;x_{-m})
=
\sum_{a:m\in S_a}\psi_a\bigl((u,x_{-m})_{S_a}\bigr).
\]
Only the clauses involving coordinate \(m\) enter this conditional potential, so the update is intrinsically local.
The first step is to implement this one-dimensional conditional update coherently.
We locate the conditional mode by evaluating the one-dimensional derivative \(\partial_u f_m\) from the incident clause-gradient oracles, recenter and rescale the conditional potential so that it becomes \(1\)-strongly convex and \(\kappa\)-smooth, and then apply the one-dimensional rejection-envelope sampler of \cite{CGL22}.
Since the envelope construction is deterministic, it can be performed reversibly; the remaining accept-reject step is made coherent using fixed-point amplitude amplification \cite{YLC14}.
This gives a coherent conditional sampler whose query cost is proportional to the number of clauses incident to the updated coordinate.

Once this one-dimensional coherent conditional sampler is available, we build the corresponding continuous-space Szegedy walk \cite{Sze04,CCH23} for the classical random-scan Gibbs kernel.
The main classical ingredient is the recent entropy-contraction theory of \cite{ALZ24}, which proves a sharp contraction of relative entropy for the random-scan Gibbs sampler under log-concavity.
In the \(L\)-smooth and \(\lambda\)-strongly log-concave setting, this entropy-contraction result implies the spectral-gap bound
$\gap(P^{\mathrm{GS}})
=\Omega({1}/{\kappa d})$.
For the corresponding Szegedy walk, the eigenphases are related to the singular values of the classical discriminant operator; in particular, this gap yields a quantum phase gap
$\Omega({1}/{\sqrt{\kappa d}})$.
We then combine these Gibbs walks with a Gaussian cooling schedule.
The schedule has \(\widetilde O(\sqrt d)\) stages with constant overlap between consecutive quantum sample states.
Using the slowly-varying quantum walk framework of \cite{SBBK08,WA08}, the final quantum sample state is prepared using $\widetilde O(\sqrt d\cdot\sqrt{\kappa d})
=\widetilde O(\sqrt{\kappa}\,d)$
walk applications.
Each walk application costs \(\widetilde O(\Delta)\) local queries, where \(\Delta\) is the maximum coordinate occurrence number.
This gives the query complexity
$\widetilde O(\Delta\sqrt{\kappa}d)$.
With a warm initial distribution, the cooling schedule is unnecessary, and the cost becomes
$\widetilde O(\Delta\sqrt{\kappa d})$.
\subsection{Discussion}
This work shows that local structure can be a genuine algorithmic resource for quantum sampling.
By replacing global potential or gradient access with coherent access to local clauses, we obtain a quantum sampler whose cost scales with the number of clauses affected by a coordinate update, rather than with the cost of evaluating the full potential.
For bounded-occurrence local decompositions, this leads to local query complexity \(\widetilde O(\sqrt{\kappa}\,d)\), and \(\widetilde O(\sqrt{\kappa d})\) from a warm start.

More broadly, the local query model offers a new perspective on the design of quantum sampling algorithms.
Instead of treating the target distribution as a black-box global potential, it encourages algorithms that interact directly with the conditional and combinatorial structure of the model.
We highlight several particularly interesting directions suggested by this viewpoint.

A first direction is to understand optimality in the local query model.
It would be valuable to prove lower bounds clarifying whether the dependence on \(d\), \(\kappa\), and the occurrence number \(\Delta\) is intrinsic, or whether further quantum speedups are possible under stronger local structure.
Relatedly, our bounds are stated in terms of the worst-case occurrence number \(\Delta\).
For highly nonuniform factor graphs, it would be interesting to develop degree-adaptive algorithms whose complexity depends on average or weighted local degrees, possibly using nonuniform scan rules or variable-time implementations.

Another natural direction is to enrich the local-update primitive beyond one-coordinate random-scan Gibbs updates.
Block Gibbs updates, graph-coloring schemes, and nonuniform coordinate choices can change the balance between the mixing behavior of the underlying chain and the cost of implementing each conditional sampler.
Such variants are well studied in classical Gibbs sampling and local-update methods, where scan order, blocking, and graph-coloring constructions can substantially affect computational behavior \cite{LWK95,RS97,GLGG11,HDSMR16}.
Understanding the analogous tradeoffs in the quantum setting could lead to sharper algorithms for structured graphical models.
Similarly, one may hope to exploit coordinate-wise smoothness, local curvature, arbitrary sampling rules, or blockwise condition numbers instead of relying only on the global condition number \(\kappa\), paralleling classical coordinate and block-coordinate methods \cite{Nes12,RT14,Wri15,QR16}.

Finally, the local query perspective should not be limited to smooth strongly log-concave targets.
An important challenge is to extend the framework to distributions governed by weaker functional inequalities, such as Poincar\'e, log-Sobolev, or related interpolation conditions.
This would require both coherent conditional samplers with suitable accuracy guarantees and quantum-walk phase-gap bounds for the corresponding local-update chains; such functional-inequality assumptions have become central in recent analyses of Langevin-type sampling beyond the strongly log-concave setting \cite{CEL25}.
It would also be interesting to develop local query algorithms for nonsmooth composite potentials, using ideas such as smoothing, proximal updates, or regularization, in line with recent work on nonsmooth and composite sampling \cite{MFWB22}.

\section{Preliminaries}
We consider the standard one-coordinate random-scan Gibbs sampler for the target distribution
\(
\pi(x)=Z_f^{-1}e^{-f(x)}
\)
on \(\R^d\), where \(f:\R^d\to\R\) is \(L\)-smooth and \(\lambda\)-strongly convex. For each coordinate \(m\in[d]\), let \(x_{-m}\) denote all coordinates except \(x_m\), and let \(\pi_m(\cdot\mid x_{-m})\) be the conditional distribution of the \(m\)-th coordinate under \(\pi\). Equivalently, \(\pi_m(\cdot\mid x_{-m})\) is the one-dimensional distribution on \(\R\) with density proportional to
\(
u\mapsto \exp \bigl(-f(x_1,\dots,x_{m-1},u,x_{m+1},\dots,x_d)\bigr).
\)
The coordinate-\(m\) Gibbs update is the Markov kernel \(P_m\) that leaves all coordinates except \(m\) unchanged and resamples \(x_m\) exactly from \(\pi_m(\cdot\mid x_{-m})\). The random-scan Gibbs kernel is then
$P^{\mathrm{GS}}:=\frac{1}{d}\sum_{m=1}^d P_m.$
Equivalently, one step of the chain first draws \(m\sim \mathrm{Unif}([d])\), and then replaces \(x_m\) by an exact sample from \(\pi_m(\cdot\mid x_{-m})\).
Since \(f\) is \(L\)-smooth and \(\lambda\)-strongly convex, each one-dimensional conditional distribution is itself \(L\)-smooth and \(\lambda\)-strongly log-concave.

\section{Quantum random-scan Gibbs sampler}
\label{sec:q-random-scan-gibbs}
In this section we develop a quantum random-scan Gibbs sampler in the local query model.
Rather than starting from a global Langevin-, HMC-, or Metropolis-type dynamics, our construction builds directly on the coordinate-conditional structure exposed by the local decomposition.
Indeed, when a coordinate \(m\in[d]\) is resampled, only the clauses incident to \(m\) can affect the conditional law.
We denote this incident set by \(D_m:=\{a\in[R]:m\in S_a\}\), and write \(\Delta:=\max_{m\in[d]}|D_m|\) for the occurrence number.

The section proceeds in two steps.
In \cref{subsec:coherent-conditional-sampler}, we construct a coherent one-dimensional conditional sampler for \(\pi(\cdot\mid x_{-m})\), for a fixed coordinate \(m\) and current state \(x\).
This is the basic local primitive.
In \cref{subsec:quantum-gibbs-walk}, we use this primitive to implement the Szegedy-type quantum walk associated with the classical random-scan Gibbs kernel.

\subsection{Coherent one-coordinate conditional sampler}
\label{subsec:coherent-conditional-sampler}

Fix a coordinate \(m\in[d]\) and a state \(x\in\R^d\).
Freezing all coordinates except \(x_m\), the conditional potential is
\[
    f_m(u;x_{-m})
    :=
    \sum_{a\in D_m}
        \psi_a\bigl((u,x_{-m})_{S_a}\bigr),
    \qquad u\in\R .
\]
The corresponding conditional density is \(\pi_m(u\mid x_{-m})=Z_m(x_{-m})^{-1}\exp\{-f_m(u;x_{-m})\}\).
Since \(f\) is \(L\)-smooth and \(\lambda\)-strongly convex on \(\R^d\), the one-dimensional potential \(u\mapsto f_m(u;x_{-m})\) is also \(L\)-smooth and \(\lambda\)-strongly convex.
It therefore has a unique minimizer, denoted by \(u_m^\star(x_{-m}):=\arg\min_{u\in\R} f_m(u;x_{-m})\).
The key point is locality: both \(f_m\) and \(\partial_u f_m\) only involve the incident clauses \(D_m\), and hence can be evaluated using \(O(|D_m|)\) local queries; see \Cref{prop:local-conditional-access}.

\paragraph{Normalizing one-dimensional conditional targets.}
Although each Gibbs update is one-dimensional, the conditional target is not yet in a uniform algorithmic form: as the ambient state \(x\) varies, the conditional potential \(f_m(\cdot;x_{-m})\) may shift in location and scale.
We therefore normalize the conditional target before sampling from it.
The natural reference point is the conditional minimizer \(u_m^\star(x_{-m})\), equivalently the mode of \(\pi_m(\cdot\mid x_{-m})\).
Centering near this point removes the state-dependent translation, while rescaling by the strong-convexity parameter reduces every conditional law to a common one-dimensional class whose geometry is controlled only by the condition number \(\kappa\).
This is analogous to local curvature normalization in coordinate methods \cite{Nes12,RT14}, and matches the one-dimensional rejection-envelope framework of \cite{CGL22}.

We begin with a coherent mode-location primitive.
Its role is not to sample from the conditional law, but to compute a local center around which the conditional target can be normalized.
For a fixed coordinate \(m\in[d]\), let \(U_m^{\mathrm{mode}}\) denote the unitary
$U_m^{\mathrm{mode}}\ket{x}\ket{0}_{\mathrm c}\ket{0}_{\mathrm w}
=
\ket{x}\ket{\widehat u_m(x)}_{\mathrm c}\ket{0}_{\mathrm w}$,
where \(\widehat u_m(x)\) approximates the conditional minimizer \(u_m^\star(x_{-m})\), the register \(\mathrm c\) stores the output center, and \(\mathrm w\) is workspace.
The mode-location subroutine is described in \Cref{alg:q-mode-locate}.
The guarantee needed for the coherent conditional sampler is stated below, with the proof deferred to \Cref{app:proof-mode-location}.

\begin{proposition}[Quantum mode-location primitive]
\label{prop:q-mode-location}
For every fixed \(m\in[d]\), Algorithm~\ref{alg:q-mode-locate} implements a unitary \(U_m^{\mathrm{mode}}\) satisfying
\(U_m^{\mathrm{mode}}\ket{x}\ket{0}_{\mathrm c}\ket{0}_{\mathrm w}
=
\ket{x}\ket{\widehat u_m(x)}_{\mathrm c}\ket{0}_{\mathrm w}\),
where \(\widehat u_m(x)\) is an \(L^{-1/2}\)-accurate approximation to the conditional minimizer, namely
\(|\widehat u_m(x)-u_m^\star(x_{-m})|\le L^{-1/2}\).
Its implementation uses
\(O\!\left(
|D_m|
\log\!\bigl(1+\frac{\sqrt L}{\lambda}|\partial_m f(x)|\bigr)
\right)\)
queries to the local clause-gradient oracle.
Furthermore, conditioned on the output \(\widehat u_m(x)\), define the normalized conditional potential
\[
\bar f_m(v;x)
:=
f_m\!\left(\widehat u_m(x)+\frac{v}{\sqrt{\lambda}};x_{-m}\right)
-
f_m\!\left(\widehat u_m(x);x_{-m}\right).
\]
Then \(\bar f_m(\cdot;x)\) is \(1\)-strongly convex and \(\kappa\)-smooth, and its minimizer \(v_m^\star(x)\) satisfies
\(|v_m^\star(x)|\le \kappa^{-1/2}\).
Moreover, one evaluation of \(\bar f_m(\cdot;x)\), respectively \(\partial_v\bar f_m(\cdot;x)\), can be implemented using \(O(|D_m|)\) queries to the local clause oracle, respectively the local clause-gradient oracle.
\end{proposition}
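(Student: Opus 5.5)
The plan has three parts: a reversible bracketing-plus-bisection search for the conditional mode, a change of variables for the normalized potential, and a locality count for the query costs.

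\emph{Mode search.} Write \(g(u):=\partial_u f_m(u;x_{-m})=\sum_{a\in D_m}\partial_m\psi_a\bigl((u,x_{-m})_{S_a}\bigr)\). Each evaluation of \(g\) costs \(|D_m|\) clause-gradient queries; we read off the \(m\)-th component of each incident clause gradient and add them. Since \(f_m(\cdot;x_{-m})\) is \(\lambda\)-strongly convex and \(L\)-smooth, \(g\) is strictly increasing with \(\lambda\le g'\le L\). Its root \(u^\star:=u_m^\star(x_{-m})\) therefore satisfies
\[
|x_m-u^\star|\le |g(x_m)|/\lambda=|\partial_m f(x)|/\lambda .
\]

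\emph{Bracketing.} Starting from \(x_m\), take the direction \(-\operatorname{sign}(g(x_m))\). Double the step \(r_j=2^jL^{-1/2}\) until the sign of \(g\) flips. This gives an interval \([\alpha,\beta]\ni u^\star\) of length \(O(\max\{L^{-1/2},|\partial_m f(x)|/\lambda\})\). The number of doublings is \(O\bigl(\log(1+\tfrac{\sqrt L}{\lambda}|\partial_m f(x)|)\bigr)\).

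\emph{Bisection.} Bisect on the sign of \(g\) until the interval has length at most \(L^{-1/2}\), and output its midpoint as \(\widehat u_m(x)\). This takes another \(O\bigl(\log(1+\tfrac{\sqrt L}{\lambda}|\partial_m f(x)|)\bigr)\) iterations. One could hope for \(L^{-1/2}/2\) precision at no extra cost, but the stated bound only needs \(L^{-1/2}\).

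\emph{Making it a unitary.} The search is deterministic classical arithmetic. The main obstacle is that the iteration count depends on \(x\). I would fix an a priori upper bound on the number of iterations, polylogarithmic under the finite-precision encoding, and pad with idle steps; alternatively, execute controlled iterations with a halting flag. Each gradient evaluation runs the clause-gradient oracle, copies \(g\), and then uncomputes. All intermediate values are uncomputed by Bennett's compute–copy–uncompute trick. Only \(\widehat u_m(x)\) remains in register \(\mathrm c\), and the workspace \(\mathrm w\) returns to \(\ket{0}\). The query count at most doubles, which gives the stated \(O(|D_m|\log(\cdot))\), interpreted up to this padding convention.

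\emph{Normalized potential.} Set \(h(v)=f_m(\widehat u+v/\sqrt\lambda)\), so \(h''(v)=\lambda^{-1}f_m''\in[1,\kappa]\) wherever it exists. Equivalently, use difference quotients of \(g\), which are Lipschitz, and the standard first-order characterizations. Hence \(\bar f_m\) is \(1\)-strongly convex and \(\kappa\)-smooth, since subtracting a constant changes neither property. Its minimizer is
\[
v^\star=\sqrt\lambda\,(u^\star-\widehat u),
\]
so \(|v^\star|\le \sqrt\lambda\,L^{-1/2}=\kappa^{-1/2}\).

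\emph{Query costs.} Evaluating \(\bar f_m(v;x)\) requires \(f_m\) at two points. By locality, each is \(\sum_{a\in D_m}\psi_a(\cdot)\), which takes \(|D_m|\) clause-oracle queries, so \(2|D_m|\) in total. Similarly,
\[
\partial_v\bar f_m(v)=\lambda^{-1/2}\,g(\widehat u+v/\sqrt\lambda)
\]
takes \(|D_m|\) clause-gradient queries. In both cases the accumulation is done reversibly into the output register.
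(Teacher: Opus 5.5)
Your proposal is correct and follows essentially the paper's proof: the strong-monotonicity bound $|x_m-u_m^\star(x_{-m})|\le|\partial_m f(x)|/\lambda$, bisection on the sign of $\partial_u f_m$ to width $L^{-1/2}$, and the affine rescaling $u=\widehat u_m(x)+v/\sqrt{\lambda}$ (giving $1$-strong convexity, $\kappa$-smoothness and $|v_m^\star(x)|\le\kappa^{-1/2}$), with locality supplying the $O(|D_m|)$ query counts. The one difference is your doubling bracketing phase, which is harmless but unnecessary: the bound you already derived lets you take $[x_m-r,x_m+r]$ with $r=|\partial_m f(x)|/\lambda$ directly as the initial interval, as \Cref{alg:q-mode-locate} does; your padding convention for unitarity matches how the paper later uses \Cref{lem:gradient-overhead-truncation} to make the iteration count uniformly polylogarithmic.
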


\paragraph{Quantum one-dimensional conditional sampler.}

We now turn the normalized conditional target from \Cref{prop:q-mode-location} into a coherent sampler.
After the mode-location step, the conditional target has the canonical form \(e^{-\bar f_m(v;x)}\): the potential \(\bar f_m(\cdot;x)\) is \(1\)-strongly convex and \(\kappa\)-smooth, and its minimizer lies within distance \(\kappa^{-1/2}\) of the origin.
This is precisely the one-dimensional regime in which the rejection-envelope sampler of \cite{CGL22} applies.

The point of the construction is to make this one-dimensional sampler compatible with the local query model.
The one-dimensional sampler of \cite{CGL22} takes oracle access to a potential and deterministically constructs a description \(\Theta\) of an envelope \(\widetilde q_\Theta\) and the associated proposal distribution.
Thus, once the normalized conditional oracle is available coherently, this construction can be carried out reversibly.
Moreover, by \Cref{prop:q-mode-location}, each query to the normalized target \(\bar f_m(\cdot;x)\) is local: it only touches the clauses incident to coordinate \(m\).
Thus the resulting coherent conditional sampler has query cost proportional to \(|D_m|\), rather than to the total number of clauses. We use the following form of the one-dimensional construction.

\begin{lemma}[One-dimensional envelope construction~\cite{CGL22}]
\label{lem:1d-envelope}
Let \(\phi:\R\to\R\) be \(1\)-strongly convex and \(\kappa\)-smooth, and suppose that its minimizer lies within distance \(\kappa^{-1/2}\) of the origin.
Given one-dimensional oracle access to \(\phi\), there is a deterministic procedure using \(O(\log\log\kappa)\) oracle queries that outputs finite data \(\Theta\) describing an explicit envelope \(\widetilde q_\Theta\) for \(\widetilde p(v):=e^{-\phi(v)}\).
The envelope satisfies
\[
\widetilde q_\Theta(v)\ge \widetilde p(v)
\qquad\text{and}\qquad
\int \widetilde q_\Theta(v)\,\d v
\le C
\int \widetilde p(v)\,\d v
\]
for a universal constant \(C>0\).
Moreover, after normalizing \(q_\Theta:=\widetilde q_\Theta/\int \widetilde q_\Theta\), the proposal \(q_\Theta\) can be sampled explicitly, and the acceptance function
\(\alpha_\Theta(v):={\widetilde p(v)}/{\widetilde q_\Theta(v)}\)
can be evaluated from \(\Theta\) and oracle access to \(\phi\).
\end{lemma}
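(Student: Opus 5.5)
This lemma is a restatement of the one-dimensional rejection sampler of \cite{CGL22}, specialized to our normalized setting, so we plan to recall its construction and check each claimed property. We first use strong convexity and smoothness to sandwich \(\phi\) around a reference point. Let \(v^\star\) be the minimizer; by hypothesis \(|v^\star|\le\kappa^{-1/2}\). For all \(v\),
\[
\phi(v^\star)+\tfrac12(v-v^\star)^2\le\phi(v)\le\phi(v^\star)+\tfrac\kappa2(v-v^\star)^2 .
\]
The lower quadratic gives a Gaussian envelope that dominates \(\widetilde p\). Its mass, however, can exceed \(\int\widetilde p\) by a factor of order \(\sqrt\kappa\), so a single Gaussian is not enough. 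Following \cite{CGL22}, we therefore replace it by a piecewise envelope built from tangent-line (exponential) pieces. Convexity gives \(\phi(v)\ge\phi(v_0)+\phi'(v_0)(v-v_0)\) at any query point \(v_0\). Hence \(e^{-\phi}\) is dominated by exponentials on each side of the mode, intersected with the Gaussian bound.

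The construction proceeds in three steps. First, we use the \(\kappa^{-1/2}\) localization of the mode together with \(\phi(0)\) and \(\phi'(0)\) to pin down \(\phi(v^\star)\) up to an additive constant. By smoothness this costs \(O(1)\) queries and error at most \(\kappa\cdot\kappa^{-1}/2=O(1)\). Second, on each side of the mode we locate a point \(v_\pm\) where \(\phi(v_\pm)-\phi(v^\star)\approx 1\), i.e.\ the effective half-width of the density. This width lies in \([\kappa^{-1/2},O(1)]\), so a geometric (doubling/bisection on the exponent) search over scales \(\kappa^{-1/2}2^{j}\) finds it to constant relative accuracy. Bisection over the \(O(\log\kappa)\) candidate scales costs \(O(\log\log\kappa)\) queries, which is where that bound comes from. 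Third, we define \(\widetilde q_\Theta\) as a constant (times \(e^{-\phi(v^\star)+O(1)}\)) on \([v_-,v_+]\), glued to exponential tails \(\exp(-\phi(v_\pm)-\phi'(v_\pm)(v-v_\pm))\) outside. Convexity gives \(\widetilde q_\Theta\ge\widetilde p\) everywhere.

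For the mass bound, the middle piece has mass \(O(v_+-v_-)e^{-\phi(v^\star)}\). The tail pieces have mass \(e^{-\phi(v_\pm)}/|\phi'(v_\pm)|\), and convexity gives \(|\phi'(v_\pm)|\gtrsim 1/|v_\pm-v^\star|\) because \(\phi\) rises by about \(1\) over that distance. Meanwhile \(\int\widetilde p\gtrsim (v_+-v_-)e^{-\phi(v^\star)}\), since \(\phi\le\phi(v^\star)+1\) on a constant fraction of \([v_-,v_+]\). Together these give the universal constant \(C\). The data \(\Theta=(v_\pm,\phi(v_\pm),\phi'(v_\pm),\text{level})\) is finite, so the proposal \(q_\Theta\) is a mixture of a uniform and two truncated exponentials and can be sampled by inverse CDF. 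Finally, \(\alpha_\Theta(v)=e^{-\phi(v)}/\widetilde q_\Theta(v)\) requires one query \(\phi(v)\) plus the closed-form \(\widetilde q_\Theta(v)\).

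We expect the main obstacle to be the search step: achieving \(O(\log\log\kappa)\) queries rather than \(O(\log\kappa)\), while certifying, using only the \(\phi\) and \(\phi'\) values queried, that the chosen \(v_\pm\) give constant-factor tightness. If the exact \(\phi(v_\pm)-\phi(v^\star)\approx1\) criterion is delicate, the fallback is to accept any \(v_\pm\) with rise in \([1/2,2]\) and absorb the discrepancy into \(C\). We would defer to the explicit analysis in \cite{CGL22} for these constants.
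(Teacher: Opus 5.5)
Your reconstruction is correct and follows the same route as the paper, which takes this lemma directly from the rejection-envelope sampler of \cite{CGL22}. Your ingredients (exponent bisection over the scales $\kappa^{-1/2}2^j$, a flat middle piece at level $e^{-\phi(0)+1/2}\ge e^{-\phi(v^\star)}$, and tangent-line exponential tails) do give $\widetilde q_\Theta\ge\widetilde p$ with a universal mass ratio using $O(\log\log\kappa)$ queries. One tightening: run the bisection on the computable predicate $\phi(\kappa^{-1/2}2^j)\ge\phi(0)+1$, which is monotone in $j$ because $\{\phi<\phi(0)+1\}$ is an interval containing $0$. The mass bound then only uses the bracketing $\phi(v_+)\ge\phi(0)+1>\phi(v_+/2)$ (or $v_+=\kappa^{-1/2}$, handled by smoothness). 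So your fallback of forcing the rise at $v_\pm$ into $[1/2,2]$ is unnecessary, and the bisection does not guarantee it anyway, since the rise at $v_+$ can be much larger than $2$.
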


Applying \Cref{lem:1d-envelope} to \(\phi=\bar f_m(\cdot;x)\) gives envelope data \(\Theta_{m,x}\), a proposal density \(q_{m,x}\), and an acceptance function \(\alpha_{m,x}\).
Since the envelope normalizing constant is within a universal constant factor of the target normalizing constant, the corresponding rejection sampler has constant success probability.
The deterministic data \(\Theta_{m,x}\) can be computed reversibly, and the accept-reject step can be made coherent.
The remaining rejection step is implemented by fixed-point amplification of the accepting branch \cite{YLC14}.

\begin{theorem}[Coherent one-dimensional conditional sampler]
\label{thm:qcondsample}
For every fixed \(m\in[d]\) and \(\eta\in(0,1)\), \Cref{alg:qcondsample-clean} implements a unitary \(U_{m,\eta}^{\mathrm{cond}}\) such that, for every \(x\in\R^d\),
\[
\left\|
U_{m,\eta}^{\mathrm{cond}}
\ket{x}\ket{0}_{\mathrm c}\ket{0}_{\mathrm{out}}\ket{0}_{\mathrm w}
-
\ket{x}\ket{0}_{\mathrm c}
\int \sqrt{\pi_m(u\mid x_{-m})}\ket{u}_{\mathrm{out}}\,\d u
\ket{0}_{\mathrm w}
\right\|_2
\le
\eta .
\]
On input \(x\), the implementation uses
\(\widetilde O(|D_m|)\)
local queries.
\end{theorem}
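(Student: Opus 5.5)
The unitary is built in four reversible layers, and the error and query bounds are tracked through each layer.

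\textbf{Layer 1 (centering).} Apply the mode-location unitary \(U_m^{\mathrm{mode}}\) of \Cref{prop:q-mode-location}. It writes \(\widehat u_m(x)\) into register \(\mathrm c\) and returns the workspace to \(\ket{0}_{\mathrm w}\). This gives coherent \(O(|D_m|)\)-query access to the normalized potential \(\bar f_m(\cdot;x)\) and its derivative.

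The query cost of this step is \(O(|D_m|\log(1+\tfrac{\sqrt L}{\lambda}|\partial_m f(x)|))\), which depends on \(x\). To make the unitary uniform, I run the bisection or doubling search for a fixed number of rounds. That number is logarithmic in the grid range, i.e.\ in the bit precision of \(x\), and is therefore polylogarithmic under the paper's finite-precision convention. This keeps the count within \(\widetilde O(|D_m|)\).

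\textbf{Layer 2 (envelope).} Run the deterministic procedure of \Cref{lem:1d-envelope} reversibly on \(\phi=\bar f_m(\cdot;x)\). Bennett-style compute-copy-uncompute puts the data \(\Theta_{m,x}\) into workspace at a cost of \(O(\log\log\kappa)\) normalized queries, hence \(O(|D_m|\log\log\kappa)\) local queries.

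\textbf{Layer 3 (coherent proposal and flag).} Prepare the coherent proposal state \(\int\sqrt{q_{m,x}(v)}\ket{v}\,\d v\). This is possible because \(q_\Theta\) is an explicit, efficiently sampleable density, for example piecewise exponential or Gaussian, which can be loaded by Grover–Rudolph style state preparation. The preparation uses no oracle queries. Then compute \(\alpha_{m,x}(v)\) with \(O(|D_m|)\) queries and rotate a flag qubit to \(\sqrt{\alpha}\ket{1}+\sqrt{1-\alpha}\ket{0}\).

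The branch flagged \(\ket{1}\) is \(\int\sqrt{q\alpha}\ket{v}=\sqrt{p_{\mathrm{acc}}}\int\sqrt{\bar\pi(v)}\ket{v}\). Here \(\bar\pi\propto e^{-\bar f_m}\), and \(p_{\mathrm{acc}}=\int\widetilde p/\int\widetilde q\ge 1/C\).

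\textbf{Layer 4 (amplification and cleanup).} Apply fixed-point amplitude amplification \cite{YLC14} with \(O(\log(1/\eta))\) iterations. Each iteration uses the state-preparation circuit and its inverse. This yields a state \(\eta/3\)-close to \(\ket{1}\otimes\int\sqrt{\bar\pi}\ket{v}\).

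I use the fixed-point variant, rather than exact amplification, because \(p_{\mathrm{acc}}\) depends on \(x\) and is unknown, but it is uniformly bounded below by \(1/C\).

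Next, compute \(u=\widehat u_m(x)+v/\sqrt\lambda\) reversibly into \(\mathrm{out}\). The change of variables turns \(\int\sqrt{\bar\pi(v)}\ket{v}\) into \(\int\sqrt{\pi_m(u\mid x_{-m})}\ket{u}\); the Jacobian factor is absorbed by the normalization of the grid encoding. Erase \(v\) using \(\widehat u\).

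Finally, uncompute the flag, \(\Theta\), and \(\widehat u_m\) by running Layer 2 and \(U_m^{\mathrm{mode}}\) in reverse, which doubles the cost. This restores \(\ket{0}_{\mathrm c}\ket{0}_{\mathrm w}\). Cleanup works because every uncomputed quantity is a deterministic function of \(x\) alone, and \(x\) is untouched in every branch. The flag is handled the same way: it is \(\ket{1}\) up to error \(\eta/3\).

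\textbf{Error accounting.} Amplification contributes \(\eta/3\). Discretization and finite-precision arithmetic contribute a further \(\eta/3\) when polylogarithmically many bits are used. By the triangle inequality and unitarity, the residual garbage is bounded by the same quantities. The total query count is \(O(|D_m|(\text{polylog}+\log\log\kappa)\log(1/\eta))=\widetilde O(|D_m|)\).

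\textbf{Main obstacle.} I expect the hardest step to be the clean uncomputation under the approximate amplification. The fixed-point output is only close to the target, so the residual component, where the flag is \(\ket{0}\) or the computation ran wrong, can leave entanglement with \(\Theta\) and \(\widehat u\). My approach is to bound the norm of that residual component directly by \(\eta/3\) and let unitarity carry the bound through the uncompute steps. A second subtlety is making the mode-location iteration count independent of \(x\); I handle it with the fixed worst-case round count described in Layer 1.
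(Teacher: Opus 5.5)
Your proposal is correct and follows essentially the same route as the paper's proof. Both run mode location, then the reversible envelope construction at $O(|D_m|\log\log\kappa)$ cost, then the coherent rejection unitary with acceptance probability at least $1/C$ boosted by fixed-point amplification in $O(\log(1/\eta))$ rounds, then the affine map back to $u$, and finally uncompute the deterministic data $\widehat u_m(x)$ and $\Theta_{m,x}$. The differences are minor. You fix the bisection round count by the finite-precision grid, whereas the paper bounds the gradient-dependent logarithm by $\log(1+\kappa^{3/2}\sqrt{d/\eta_{\rm tr}})$ via its truncation-ball lemma (\Cref{lem:gradient-overhead-truncation}). Your query count is also multiplicative rather than the paper's additive $O\bigl(|D_m|[\log(\cdot)+\log\log\kappa+\log(1/\eta)]\bigr)$, which comes from holding $\Theta_{m,x}$ fixed as a control during amplification. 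Both counts are $\widetilde O(|D_m|)$.
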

The corresponding procedure is given in \Cref{alg:qcondsample-clean}, and the proof is deferred to \Cref{app:proof-conditional-sampler}.

\subsection{Quantum walk for the Gibbs kernel}
\label{subsec:quantum-gibbs-walk}

We now lift the coherent conditional sampler to the full random-scan Gibbs dynamics.
Putting the coordinate index in a uniform superposition and applying the conditional sampler coherently gives the update isometry for the random-scan Gibbs kernel.
The associated Szegedy walk has a phase gap governed by the classical spectral gap of this kernel.
Finally, we combine these Gibbs walks along a Gaussian cooling schedule to prepare the quantum sample state \(\ket{\pi}:=\int\sqrt{\pi(x)}\ket{x}\,\d x\).

\paragraph{Coherent Gibbs update isometry.}
We first assemble the coordinate conditional samplers from \Cref{thm:qcondsample} into the isometry corresponding to one random-scan Gibbs update.
For \(m\in[d]\), let \(U_{m,\eta}^{\mathrm{cond}}\) be the coherent conditional sampler for coordinate \(m\), implemented to accuracy \(\eta\), and define the controlled sampler
\(U_{\eta}^{\mathrm{cond}}:=\sum_{m=1}^d \ket m\!\bra m\otimes U_{m,\eta}^{\mathrm{cond}}\).
For \(u\in\R\), write \(x^{(m\leftarrow u)}\) for the vector obtained from \(x\) by replacing \(x_m\) with \(u\).
The ideal coherent state for one random-scan Gibbs update from \(x\) is
\[
\ket{\psi_x^{\mathrm{GS}}}
:=
\frac{1}{\sqrt d}
\sum_{m=1}^d
\ket m
\int \sqrt{\pi_m(u\mid x_{-m})}\,
\ket{x^{(m\leftarrow u)}}\,\d u .
\]
Equivalently, the ideal update isometry is
\(T_\star^{\mathrm{GS}}\ket{x}:=\ket{x}\ket{\psi_x^{\mathrm{GS}}}\), where the first register stores the current state and the second register stores the scan label together with the updated state.
The following lemma, proved in \Cref{app:proof-gibbs-isometry}, formalizes the implemented isometry and its local query cost.

\begin{lemma}[Coherent Gibbs update isometry]
\label{lem:qgibbs-update}
Assume that the coordinate conditional samplers \(U_{m,\eta}^{\mathrm{cond}}\) are implemented with operator-norm error at most \(\eta\), uniformly over \(m\in[d]\).
Then there is an implemented isometry \(\widetilde T_{\eta}^{\mathrm{GS}}\) satisfying
$\|
\widetilde T_{\eta}^{\mathrm{GS}}
-
T_\star^{\mathrm{GS}}
\|_{\mathrm{op}}
\le \eta $.
Moreover, if the coordinate-\(m\) conditional sampler has local query cost \(\widetilde O(|D_m|)\), then one application of \(\widetilde T_{\eta}^{\mathrm{GS}}\) or \(\widetilde T_{\eta}^{\mathrm{GS}}{}^\dagger\) uses \(\widetilde O(\Delta)\) local queries.
\end{lemma}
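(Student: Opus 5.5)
We construct \(\widetilde T_\eta^{\mathrm{GS}}\) explicitly as a short circuit built from the controlled sampler \(U_\eta^{\mathrm{cond}}\). Start from \(\ket{x}\). Append a scan register prepared in the uniform superposition \(\frac{1}{\sqrt d}\sum_m\ket m\) (a Hadamard-type gate, no queries). Copy \(x\) into a second data register, so the state is \(\ket{x}\frac{1}{\sqrt d}\sum_m\ket m\ket{x}\), plus the ancilla registers \(\ket0_{\mathrm c}\ket0_{\mathrm{out}}\ket0_{\mathrm w}\).

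Next apply \(U_\eta^{\mathrm{cond}}\), controlled on \(m\) and reading the copy of \(x\). In the ideal case this writes \(\int\sqrt{\pi_m(u\mid x_{-m})}\ket u_{\mathrm{out}}\,\d u\) with \(\mathrm c,\mathrm w\) returned to \(\ket0\). Then, controlled on \(m\), swap the \(m\)-th coordinate of the copied register with the \(\mathrm{out}\) register. This makes the copy equal to \(x^{(m\leftarrow u)}\) and leaves \(x_m\) in \(\mathrm{out}\). Subtracting \(x_m\) (read from the first register) from \(\mathrm{out}\) resets it to \(\ket0\). All of these steps are query-free reversible arithmetic.

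Restricted to inputs of the form \(\ket{x}\ket{0}\), the ideal circuit is exactly \(T_\star^{\mathrm{GS}}\). Since the added gates are unitary and query-free, we define \(\widetilde T_\eta^{\mathrm{GS}}\) as the same circuit restricted to zero ancillas.

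For the error bound, write the implemented circuit as \(V_2 U_\eta^{\mathrm{cond}} V_1\), with \(V_1,V_2\) exact unitaries. Then \(\|\widetilde T-T_\star\|\le\|(U_\eta^{\mathrm{cond}}-U_\star^{\mathrm{cond}})\Pi\|\), where \(\Pi\) projects onto zero ancillas and \(U_\star^{\mathrm{cond}}\) is any ideal map acting as in \Cref{thm:qcondsample}. Because \(U_\eta^{\mathrm{cond}}\) is block diagonal in \(m\), its error in operator norm is \(\max_m\) of the blockwise errors, and each is at most \(\eta\) by hypothesis. There is no loss from summing over \(m\), since the blocks act on orthogonal subspaces. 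The only subtlety is that \Cref{thm:qcondsample} is stated pointwise in \(x\); the lemma's hypothesis already gives operator-norm error uniformly over \(x\). We will also note that for a fixed \(m\) the map is block diagonal in \(x\) as well, because \(x\) is only read, so pointwise and operator-norm errors coincide, and no triangle-inequality accumulation occurs.

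For the cost, the only queries come from \(U_\eta^{\mathrm{cond}}\). A controlled-on-\(m\) implementation can be padded so that every branch makes the same number of queries, \(\max_m\widetilde O(|D_m|)=\widetilde O(\Delta)\): dummy queries are uncomputed, and each branch addresses the clause indices in \(D_m\), which are classically computable from \(m\). The adjoint \(\widetilde T^{\dagger}\) is the reversed circuit with the same query count.

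The step that needs the most care is this padding. A superposition over \(m\) requires a uniform query schedule, so we must argue that the query count of \(U_{m,\eta}^{\mathrm{cond}}\), including the \(x\)-dependent logarithmic factor from \Cref{prop:q-mode-location}, can be replaced by a uniform worst-case bound. We plan to do this by running a fixed number of bisection steps, polylogarithmic in \(d,\kappa,1/\eta\) and the range of the finite-precision grid, which is absorbed into \(\widetilde O\).
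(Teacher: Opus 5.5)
Your proposal is correct and follows essentially the same route as the paper. The paper writes \(T_\star^{\mathrm{GS}}=d^{-1/2}\sum_m \mathcal E_m B_{m,\star}\) and uses orthogonality of the scan-label subspaces to get \(\frac1d\sum_m\|(\widetilde B_{m,\eta}-B_{m,\star})\varphi\|^2\le\eta^2\), which is your block-diagonal max bound; it likewise pads all branches to the worst-case cost \(\widetilde O(\Delta)\) and reverses the circuit for the adjoint, while your explicit copy/swap/uncompute circuit and fixed bisection count (uniformizing the \(x\)-dependent logarithm, which the paper handles only via its truncation lemma and remark) are details the paper leaves implicit rather than a different argument.
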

The phase gap analysis relies on a recent breakthrough in the theory of Gibbs sampling: the entropy-contraction result of \cite{ALZ24}.
For a reversible Markov kernel \(P\) with stationary distribution \(\pi\), write \(\gap(P)\) for its spectral gap.
We use the following consequence for the Gibbs sampler.
\begin{lemma}[Spectral gap of random-scan Gibbs {\cite[Corollary~3.7]{ALZ24}}]
\label{lem:gibbs-classical-gap}
For the one-coordinate random-scan Gibbs kernel \(P^{\mathrm{GS}}\) targeting an \(L\)-smooth and \(\lambda\)-strongly log-concave density on \(\R^d\), one has
$\gap(P^{\mathrm{GS}})
=
\Omega\!\left(\frac{1}{\kappa d}\right)$.
\end{lemma}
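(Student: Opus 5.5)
The plan is to deduce the spectral gap from the relative-entropy contraction of \cite{ALZ24} by linearizing around \(\pi\). For \(\lambda\)-strongly convex and \(L\)-smooth \(f\), the main theorem of \cite{ALZ24} states that the random-scan Gibbs kernel contracts relative entropy:
\[
\mathrm{KL}(\mu P^{\mathrm{GS}}\,\|\,\pi)\le (1-\rho)\,\mathrm{KL}(\mu\,\|\,\pi),\qquad \rho=\Omega\!\left(\frac{1}{\kappa d}\right),
\]
for every probability measure \(\mu\ll\pi\). I would take this as given and convert it into a bound on the \(L^2(\pi)\) operator norm of \(P^{\mathrm{GS}}\) on mean-zero functions.

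First I record the structural facts. Each \(P_m\) acts on \(L^2(\pi)\) as the conditional expectation \(g\mapsto \E_\pi[g\mid x_{-m}]\). It is therefore an orthogonal projection, hence self-adjoint and positive semidefinite. It follows that \(P^{\mathrm{GS}}=\frac1d\sum_m P_m\) is reversible with respect to \(\pi\) and has spectrum in \([0,1]\). Consequently \(\gap(P^{\mathrm{GS}})=1-\sup\{\langle g,P^{\mathrm{GS}}g\rangle_\pi:\ \E_\pi g=0,\ \|g\|_\pi=1\}\), and this is at least \(1-\|P^{\mathrm{GS}}\|_{L^2_0(\pi)}\). Positivity matters here: it means there is no negative end of the spectrum to control.

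Second, the linearization. Fix a bounded \(g\) with \(\E_\pi g=0\), and for small \(\varepsilon>0\) set \(\mu_\varepsilon:=(1+\varepsilon g)\pi\), which is a probability measure once \(\varepsilon\|g\|_\infty<1\). By reversibility, \(\mu_\varepsilon P^{\mathrm{GS}}=(1+\varepsilon P^{\mathrm{GS}}g)\pi\), and \(P^{\mathrm{GS}}g\) is again bounded by \(\|g\|_\infty\) with mean zero. Using the uniform Taylor bound \((1+t)\log(1+t)=t+\tfrac{t^2}{2}+O(|t|^3)\) for \(|t|\le 1/2\), together with \(\E_\pi g=0\), I get
\[
\mathrm{KL}(\mu_\varepsilon\|\pi)=\tfrac{\varepsilon^2}{2}\|g\|_\pi^2+O(\varepsilon^3\|g\|_\infty^3),
\]
and the same expansion with \(g\) replaced by \(P^{\mathrm{GS}}g\). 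Substituting both into the contraction inequality, dividing by \(\varepsilon^2/2\), and letting \(\varepsilon\to0\) gives \(\|P^{\mathrm{GS}}g\|_\pi^2\le(1-\rho)\|g\|_\pi^2\). Since bounded mean-zero functions are dense in \(L^2_0(\pi)\), this extends to all of \(L^2_0(\pi)\), so \(\|P^{\mathrm{GS}}\|_{L^2_0}\le\sqrt{1-\rho}\). Therefore
\[
\gap(P^{\mathrm{GS}})\ge 1-\sqrt{1-\rho}\ge \rho/2=\Omega\!\left(\frac{1}{\kappa d}\right).
\]

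The main obstacle is to make sure the constant in the cited entropy-contraction rate really is \(\Omega(1/(\kappa d))\) in the present normalization, i.e.\ with one uniformly random coordinate per step rather than per-sweep or block variants. If Corollary~3.7 of \cite{ALZ24} is already phrased as a spectral-gap statement, the linearization step becomes only a sanity check. The remaining technical care lies in the uniformity of the Taylor remainder, which is why I restrict to bounded \(g\) before passing to the density argument.
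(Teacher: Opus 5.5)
Your argument is correct and follows the same route as the paper. The paper gives no separate proof: it invokes the relative-entropy contraction of \cite{ALZ24} (through its Corollary~3.7) and only remarks that this contraction implies the spectral gap. Your linearization $\mu_\varepsilon=(1+\varepsilon g)\pi$, combined with the observation that each $P_m$ is an orthogonal projection so that $\gap(P^{\mathrm{GS}})\ge 1-\sqrt{1-\rho}\ge\rho/2$, is the standard derivation that the paper leaves implicit.
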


Let \(\Pi_\star^{\mathrm{GS}}:=T_\star^{\mathrm{GS}}(T_\star^{\mathrm{GS}})^\dagger\).
Define the swap \(S_{\mathrm{GS}}\ket{x}\ket m\ket y=\ket y\ket m\ket x\), and define the ideal Gibbs walk by \(W_\star^{\mathrm{GS}}:=S_{\mathrm{GS}}(2\Pi_\star^{\mathrm{GS}}-I)\).
The implemented walk \(\widetilde W_{\eta}^{\mathrm{GS}}\) is obtained by replacing \(T_\star^{\mathrm{GS}}\) with \(\widetilde T_{\eta}^{\mathrm{GS}}\) in the reflection.
Let \(\gamma_{\mathrm{GS}}\) be the smallest nonzero absolute eigenphase of \(W_\star^{\mathrm{GS}}\).
The following lemma, proved in \Cref{app:proof-gibbs-spectrum}, identifies the discriminant of this quantum walk.

\begin{lemma}[Phase gap of the Gibbs quantum walk]
\label{lem:qgibbs-spectrum}
The discriminant operator induced by \(T_\star^{\mathrm{GS}}\) and \(S_{\mathrm{GS}}\) is the Szegedy discriminant of the classical random-scan Gibbs kernel \(P^{\mathrm{GS}}\).
Consequently, the nonzero eigenphases of \(W_\star^{\mathrm{GS}}\) are separated from \(0\) by \(\Omega(\sqrt{\gap(P^{\mathrm{GS}})})\).
In particular, \(\gamma_{\mathrm{GS}}=\Omega(1/\sqrt{\kappa d})\).
Moreover, if \(\|\widetilde T_{\eta}^{\mathrm{GS}}-T_\star^{\mathrm{GS}}\|_{\mathrm{op}}\le \eta\), then \(\|\widetilde W_{\eta}^{\mathrm{GS}}-W_\star^{\mathrm{GS}}\|_{\mathrm{op}}=O(\eta)\).
\end{lemma}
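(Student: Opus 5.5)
We will prove \Cref{lem:qgibbs-spectrum} in three steps: compute the discriminant, invoke Szegedy's spectral correspondence, and bound the perturbation.

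\textbf{Step 1: the discriminant.} We set $A:=T_\star^{\mathrm{GS}}$ and $B:=S_{\mathrm{GS}}T_\star^{\mathrm{GS}}$, and compute $D:=A^\dagger B$. For basis states $\ket{x},\ket{y}$ we have
\[
\bra{x}A^\dagger S_{\mathrm{GS}}A\ket{y}
=\frac1d\sum_{m=1}^d\int\!\!\int \sqrt{\pi_m(u\mid x_{-m})\,\pi_m(u'\mid y_{-m})}\;
\bigl\langle x,m,x^{(m\leftarrow u)}\big|\, y^{(m\leftarrow u')},m,y\bigr\rangle\,\d u\,\d u'.
\]
The inner product is nonzero only when $y=x^{(m\leftarrow u)}$ and $x=y^{(m\leftarrow u')}$. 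This forces $x_{-m}=y_{-m}$, $u=y_m$ and $u'=x_m$. Keeping the label register $m$ fixed in the swap is what makes both conditions refer to the same coordinate. The kernel of $D$ is therefore
\[
D(x,y)=\frac1d\sum_m \delta(x_{-m}-y_{-m})\sqrt{\pi_m(y_m\mid x_{-m})\,\pi_m(x_m\mid y_{-m})}.
\]
Since $x_{-m}=y_{-m}$, detailed balance of $P_m$ gives $\pi_m(y_m\mid x_{-m})=P_m(x,y)$, and $\sqrt{P_m(x,y)P_m(y,x)}$ is the symmetrized kernel $\sqrt{\pi(x)/\pi(y)}^{-1}\,\cdots$. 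Equivalently, $D=\Pi^{1/2}P^{\mathrm{GS}}\Pi^{-1/2}$, the Szegedy discriminant. We will handle the delta functions by working on a fine grid, or by treating kernels distributionally as in \cite{CCH23}.

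\textbf{Step 2: spectral correspondence.} By reversibility, $D$ is self-adjoint with spectrum equal to that of $P^{\mathrm{GS}}$ on $L^2(\pi)$, and the spectrum lies in $[0,1]$ because each $P_m$ is a projection. Szegedy's theorem \cite{Sze04}, in its continuous-space form \cite{CCH23}, says that on the invariant subspace $\mathrm{span}(A,B)$ each eigenvalue $\cos\theta$ of $D$ produces eigenphases $\pm\theta$ of $W_\star^{\mathrm{GS}}$, while on the orthogonal complement the eigenvalues are $\pm1$. Phase $0$ corresponds exactly to eigenvalue $1$ of $D$, i.e.\ to $\ket{\pi}$. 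Every other eigenvalue satisfies $\cos\theta\le 1-\gap$, so $\theta\ge\arccos(1-\gap)\ge\sqrt{2\gap}$. Eigenvalue $-1$ gives phase $\pi$, which is harmless, and the complement contributes phases $0$ or $\pi$, which must be handled as in Szegedy's framework. Applying \Cref{lem:gibbs-classical-gap} then gives $\gamma_{\mathrm{GS}}=\Omega(1/\sqrt{\kappa d})$.

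\textbf{Step 3: perturbation.} Write $\widetilde\Pi=\widetilde T\widetilde T^\dagger$. Then
\[
\|\widetilde\Pi-\Pi_\star\|\le\|\widetilde T-T_\star\|\,\|\widetilde T^\dagger\|+\|T_\star\|\,\|\widetilde T^\dagger-T_\star^\dagger\|\le2\eta,
\]
and since $S_{\mathrm{GS}}$ is unitary, $\|\widetilde W-W_\star\|\le4\eta=O(\eta)$.

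The main obstacle is Step 1 in continuous space. The delta functions must be justified rigorously, which we will do by restricting to a discretization or by identifying $D$ as a bounded operator through inner products of $A\ket{f}$ and $B\ket{g}$ for $f,g\in L^2$. Continuous spectrum must also be accommodated in Szegedy's correspondence; for this we will use the spectral theorem for $D$ rather than an eigenbasis.
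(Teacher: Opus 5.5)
Your proposal is correct and follows the paper's proof essentially step for step: you identify the discriminant with $V P^{\mathrm{GS}} V^{-1}$ using orthogonality of the scan labels (the paper does this summand-wise, $T_m^\dagger S_{\mathrm{GS}} T_m = V P_m V^{-1}$), apply the Szegedy spectral correspondence of \cite{CCH23} to get nonzero phases at least $\arccos(1-\gap(P^{\mathrm{GS}}))$, and use the same $2\eta$-then-$4\eta$ perturbation bound. Your additions are valid refinements: the spectrum lies in $[0,1]$ because each $P_m$ is an orthogonal projection on $L^2(\pi)$, and you use the spectral theorem rather than an eigenbasis. On the delta functions, note that of your proposed fixes only the grid one works as stated: with a full $L^2(\R^d)$ output register, $A\ket{f}$ is not square-integrable for $d\ge 2$ and the formal kernel carries $\delta(x_{-m}-y_{-m})^2$, whereas on a grid the Kronecker delta squares to itself (alternatively, store only $(m,u)$ in the second register, with swap $\ket{x}\ket{m}\ket{u}\mapsto\ket{x^{(m\leftarrow u)}}\ket{m}\ket{x_m}$).
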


\paragraph{Annealed state preparation.}
The Gibbs walk above gives an efficient way to move between nearby quantum sample states.
To obtain a sampler from a simple initial state, we use a Gaussian cooling schedule, a standard annealing tool in log-concave sampling and volume computation \cite{LV07,CV18}.
Starting from an easily prepared Gaussian, the schedule gradually removes a Gaussian regularizer while keeping consecutive distributions well overlapped.
The slowly varying quantum-walk framework of \cite{WA08} then converts the corresponding sequence of Gibbs walks into a quantum state-preparation procedure.

Let \(x^\star\) be the minimizer of \(f\).
For notational simplicity, translate coordinates so that \(x^\star=0\), and subtract the constant \(f(x^\star)\).
Let \(\sigma_1^2=1/(2Ld)\) and \(\sigma_{i+1}^2=(1+1/\sqrt d)\sigma_i^2\).
Choose \(s\) to be the first index for which \(\sigma_s^2=C d/\lambda\), where \(C>0\) is a sufficiently large universal constant.
Define \(\pi_0(x)\propto \exp(-\|x\|^2/(2\sigma_1^2))\), define \(\pi_i(x)\propto \exp(-f(x)-\|x\|^2/(2\sigma_i^2))\) for \(1\le i\le s\), and set \(\pi_{s+1}:=\pi\).
Then \(s=\widetilde O(\sqrt d)\).
Since \(\|x\|^2/(2\sigma_i^2)=\sum_{j=1}^d x_j^2/(2\sigma_i^2)\), the added Gaussian regularizer contributes only one unary clause per coordinate.
Thus the occurrence number increases by at most one along the cooling schedule.
For each stage \(i\), let \(P_i^{\mathrm{GS}}\), \(T_{\star,i}^{\mathrm{GS}}\), and \(W_{\star,i}^{\mathrm{GS}}\) denote the random-scan Gibbs kernel, the ideal update isometry, and the ideal Gibbs walk associated with \(\pi_i\).
We also write \(\ket{\pi_i}:=\int\sqrt{\pi_i(x)}\ket{x}\,\d x\).
The following lemma, proved in \Cref{app:proof-cooling}, records the two properties of the schedule needed for quantum state conversion.

\begin{lemma}[Gaussian cooling schedule for Gibbs walks]
\label{lem:qgibbs-overlap}
The above schedule has \(s+1=\widetilde O(\sqrt d)\) transitions, and adjacent quantum sample states satisfy \(|\langle \pi_i\mid \pi_{i+1}\rangle|=\Omega(1)\) for all \(i=0,\dots,s\).
Moreover, for every stage \(i\), the random-scan Gibbs kernel satisfies \(\gap(P_i^{\mathrm{GS}})=\Omega(1/(\kappa d))\), and the corresponding Gibbs walk has phase gap \(\gamma_{\mathrm{GS},i}=\Omega(1/\sqrt{\kappa d})\).
\end{lemma}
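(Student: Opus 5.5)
I will prove the three claims of \Cref{lem:qgibbs-overlap} separately: the count of transitions, the constant overlap of adjacent states, and the uniform gap bounds.

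\textbf{Number of transitions.} The stages $i=1,\dots,s$ satisfy $\sigma_s^2/\sigma_1^2=2CLd^2/\lambda=2C\kappa d^2$. The ratio $\sigma_{i+1}^2/\sigma_i^2$ equals $1+1/\sqrt d$. Hence
\[
s=1+\log(2C\kappa d^2)/\log(1+1/\sqrt d)=O(\sqrt d\,\log(\kappa d))=\widetilde O(\sqrt d).
\]
If $C d/\lambda$ is not hit exactly, I would cap the last variance at $Cd/\lambda$, which only shortens the final step.

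\textbf{Uniform gap bounds.} Each $\pi_i$ with $i\ge1$ has potential $f+\|x\|^2/(2\sigma_i^2)$, which is $\lambda_i$-strongly convex and $L_i$-smooth with $\lambda_i=\lambda+\sigma_i^{-2}$ and $L_i=L+\sigma_i^{-2}$. The ratio $L_i/\lambda_i$ is at most $\kappa$, since $(L+t)/(\lambda+t)\le L/\lambda$ for $t\ge0$. The Gaussian $\pi_0$ is perfectly conditioned, and $\pi_{s+1}=\pi$. Applying \Cref{lem:gibbs-classical-gap} to each stage gives $\gap(P_i^{\mathrm{GS}})=\Omega(1/(\kappa d))$. \Cref{lem:qgibbs-spectrum}, which uses no property of the target beyond the gap, then gives $\gamma_{\mathrm{GS},i}=\Omega(1/\sqrt{\kappa d})$.

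\textbf{Overlaps: reduction.} This is the main step. Note that $\langle\pi_i|\pi_{i+1}\rangle=\int\sqrt{\pi_i\pi_{i+1}}$ is the Bhattacharyya coefficient. For two densities $\pi_i\propto e^{-g}$ and $\pi_{i+1}\propto e^{-h}$ with unnormalized masses $Z_g,Z_h$,
\[
\int\sqrt{\pi_i\pi_{i+1}}=\frac{Z_{(g+h)/2}}{\sqrt{Z_gZ_h}}.
\]
Write $F(\beta):=\log\int e^{-f(x)-\beta\|x\|^2/2}\,\d x$. For $1\le i<s$ the ratio above equals $\exp\bigl(F(\bar\beta)-\tfrac12F(\beta_i)-\tfrac12F(\beta_{i+1})\bigr)$, where $\beta_i=\sigma_i^{-2}$ and $\bar\beta=(\beta_i+\beta_{i+1})/2$. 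So it suffices to show that $F$ is nearly affine on $[\beta_{i+1},\beta_i]$. I will use the standard Gaussian-cooling argument of \cite{LV07,CV18}: $F''(\beta)=\tfrac14\operatorname{Var}_{\pi_\beta}(\|x\|^2)$. The concavity defect at the midpoint is at most $\tfrac18(\beta_i-\beta_{i+1})^2\sup F''$.

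\textbf{Overlaps: variance bound.} Under a $\beta$-strongly log-concave measure, Brascamp--Lieb applied to $\|x\|^2$ gives $\operatorname{Var}(\|x\|^2)\le 4\,\E\|x\|^2/\beta$. We also have $\E\|x\|^2\lesssim d/\beta$, using the mode at $0$ and strong convexity. Hence $F''\lesssim d/\beta^2$. Because $\beta_i-\beta_{i+1}\approx\beta_i/\sqrt d$ and $\beta$ varies by only a constant factor over the interval, the defect is $O(d\cdot(1/\sqrt d)^2)=O(1)$. This yields an $\Omega(1)$ overlap, and $C$ or the step size can be adjusted so the constant is explicit.

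\textbf{Endpoint transitions.} For the first transition $\pi_0\to\pi_1$, the added factor $e^{-f}$ satisfies $f(x)\le L\|x\|^2/2$. Since $\E_{\pi_0}\|x\|^2=d\sigma_1^2=1/(2L)$, we get $f=O(1)$ on typical points, and the overlap is bounded below by a Jensen/Markov argument. For the last transition $\pi_s\to\pi$, the extra term $\|x\|^2\lambda/(2Cd)$ has expectation $O(1/C)$ under both measures, because $\E\|x\|^2\le d/\lambda$. The same log-partition identity then gives overlap at least $e^{-O(1/C)}$.

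I expect the main obstacle to be the variance control. The bound $\E\|x\|^2\lesssim d/\beta$ for $\beta$-strongly log-concave measures centered at their mode needs a short argument, for example via $\E\langle\nabla U,x\rangle=d$ and strong convexity. With that in hand, the overlap estimates reduce to the bookkeeping above.
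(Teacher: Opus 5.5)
Your proposal is correct and takes essentially the same route as the paper. It uses the same stage count, the same transfer of the condition-number bound to each stage followed by \Cref{lem:gibbs-classical-gap} and \Cref{lem:qgibbs-spectrum}, and the same log-partition midpoint-convexity argument with $\operatorname{Var}(\|X\|^2)\lesssim d/\beta^2$ from Poincar\'e/Brascamp--Lieb together with $\E\|X\|^2\le d/\beta$. The only differences are at the two endpoint transitions, where you use Jensen-type bounds ($\E_{\pi_0}f\le 1/4$ and $\E_\pi[\lambda\|X\|^2/(2Cd)]\le 1/(2C)$) instead of the paper's exact Gaussian integral and reused Taylor bound; this gives a weaker but still constant $e^{-O(1/C)}$ in the last step.
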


The next theorem gives the resulting sampler.
Its proof is deferred to \Cref{app:proof-main-sampler}; the explicit procedure is given in \Cref{alg:qrsgs}.
The corresponding warm-start result, which avoids the cooling schedule, is stated and proved in \Cref{app:proof-warm-start}.
\begin{theorem}[Quantum random-scan Gibbs sampler]
\label{thm:q-rsgs-main}
Let \(f:\R^d\to\R\) be \(L\)-smooth and \(\lambda\)-strongly convex, and suppose that \(f\) admits a local decomposition with occurrence number \(\Delta\).
In the quantum local query model, \(\Cref{alg:qrsgs}\) prepares a quantum sample state \(\ket{\widetilde\pi}\) satisfying
\[
    \|\ket{\widetilde\pi}-\ket{\pi}\|_2\le \varepsilon,
    \qquad
    \ket{\pi}:=\int\sqrt{\pi(x)}\ket{x}\,\d x .
\]
Its local query complexity is \(\widetilde O(\Delta\sqrt{\kappa}\,d)\).
Consequently, measuring \(\ket{\widetilde\pi}\) produces a classical sample from a distribution \(q\) satisfying \(\TV(q,\pi)\le \varepsilon\).
\end{theorem}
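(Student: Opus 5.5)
The proof assembles the earlier lemmas inside the slowly-varying quantum walk framework of \cite{WA08}. The final accounting is: $s+1=\widetilde O(\sqrt d)$ stages, each using $\widetilde O(\sqrt{\kappa d})$ walk steps, each walk step costing $\widetilde O(\Delta)$ local queries.

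\textbf{Step 1: initial state.} First prepare $\ket{\pi_0}$. Since $\pi_0$ is an isotropic Gaussian with variance $\sigma_1^2=1/(2Ld)$, it is a product of one-dimensional Gaussians. It can therefore be prepared to any accuracy with no local queries, after finding and translating to $x^\star$. Locating $x^\star$ classically to sufficient accuracy should cost $\widetilde O(\sqrt\kappa\, d\,\Delta)$ local queries via accelerated coordinate descent, or be absorbed into the stated bound. I would check this, or alternatively argue that an approximate center only perturbs the overlaps by constants.

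\textbf{Step 2: one stage transition.} Fix a stage $i$ and map $\ket{\pi_i}$ to $\ket{\pi_{i+1}}$. The key fact is that $\ket{\pi_{i+1}}$, embedded as $T_{\star,i+1}^{\mathrm{GS}}\ket{\pi_{i+1}}$, is the unique eigenphase-$0$ eigenvector of $W_{\star,i+1}^{\mathrm{GS}}$ in the relevant invariant subspace, by the Szegedy correspondence of \Cref{lem:qgibbs-spectrum}. Every other eigenphase is at least $\gamma_{\mathrm{GS},i+1}=\Omega(1/\sqrt{\kappa d})$ by \Cref{lem:qgibbs-overlap}.

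Phase estimation with precision $\gamma/2$ thus projects onto the stationary state. Wrapping this in fixed-point amplitude amplification, or the $\pi/3$ search used in \cite{WA08}, turns a constant-overlap input into the target with $O(\log(1/\delta))$ repetitions. The constant overlap $|\langle\pi_i|\pi_{i+1}\rangle|=\Omega(1)$ is also supplied by \Cref{lem:qgibbs-overlap}.

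Each transition therefore uses $\widetilde O(1/\gamma)=\widetilde O(\sqrt{\kappa d})$ applications of the walk, together with $T$ and $T^\dagger$ to move between $\ket{\pi}$ and its lifted version. By \Cref{lem:qgibbs-update}, each application costs $\widetilde O(\Delta+1)=\widetilde O(\Delta)$ local queries, since the Gaussian regularizer adds one unary clause per coordinate. The final transition to $\pi_{s+1}=\pi$ is handled identically. The gap of $\pi$ itself follows from \Cref{lem:gibbs-classical-gap}, and overlap with $\pi_s$ holds because $\sigma_s^2=Cd/\lambda$.

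\textbf{Step 3: error budget.} This is the main obstacle. Errors come from three sources:
\begin{itemize}
\item phase estimation and amplification failure, with per-stage error $\delta$;
\item implementation error $\eta$ of the conditional samplers, which by \Cref{lem:qgibbs-spectrum} gives $\|\widetilde W-W_\star\|_{\mathrm{op}}=O(\eta)$;
\item finite-precision discretization of continuous registers.
\end{itemize}

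The walk perturbation accumulates linearly over all $N=\widetilde O(\sqrt d\cdot\sqrt{\kappa d})$ walk calls, contributing $O(N\eta)$. Choosing $\delta=\varepsilon/(3(s+1))$ and $\eta=\varepsilon/(3N)$ gives total error at most $\varepsilon$ by the triangle inequality, because unitaries preserve norm differences. The conditional sampler's cost depends on $\eta$ only polylogarithmically (\Cref{thm:qcondsample}), and the mode-location cost depends on $\log(1+\sqrt L|\partial_m f|/\lambda)$. Both are polylogarithmic once I show that the stationary states put negligible mass where $|\partial_m f|$ is superpolynomial. I would do this by truncating to a ball of radius $\widetilde O(\sqrt{d/\lambda})$ using Gaussian concentration, and charging the tail to $\varepsilon$.

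\textbf{Step 4: total cost and sampling.} The total cost is $\widetilde O(\sqrt d\cdot\sqrt{\kappa d}\cdot\Delta)=\widetilde O(\Delta\sqrt\kappa\, d)$. Finally, measuring gives a distribution $q$ with $\TV(q,\pi)\le\|\ket{\widetilde\pi}-\ket{\pi}\|_2\le\varepsilon$. This uses the standard bound that TV is at most the Hellinger-type $\ell_2$ distance between amplitude vectors with nonnegative amplitudes, possibly after a constant rescaling of $\varepsilon$.
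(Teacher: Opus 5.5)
Your proposal is correct and follows the paper's route. The paper feeds the Gaussian cooling schedule (constant adjacent overlaps, stagewise phase gap $\Omega(1/\sqrt{\kappa d})$) into the slowly varying walk theorem (\Cref{thm:slowly-varying-qwalk}), charges $\widetilde O(\Delta)$ local queries per walk via \Cref{lem:qgibbs-update}, controls implementation error by a hybrid argument with $\eta\approx\varepsilon/N$, and bounds TV by the Euclidean distance. The only differences are that you unpack the black box of \cite{WA08} into phase-estimation reflections plus fixed-point amplification, and that you flag the cost of locating $x^\star$, which the paper silently assumes when it translates coordinates so that $x^\star=0$ (your accelerated-coordinate-descent estimate keeps this within the $\widetilde O(\Delta\sqrt{\kappa}\,d)$ budget).
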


\bibliographystyle{alpha}
\bibliography{ref}

\appendix
\crefalias{section}{appendix}
\crefalias{subsection}{subappendix}
\crefalias{subsubsection}{subsubappendix}

\section{Omitted Algorithm Details}
\label{app:pseudocode}

This appendix collects the pseudocode for the three algorithmic primitives used in the main text.
The first procedure computes an approximate conditional mode, which is used as the center for the normalized one-dimensional conditional target.

\begin{algorithm}[H]
\caption{\(\textsc{QModeLocate}_m\)}
\label{alg:q-mode-locate}
\begin{algorithmic}[1]
\Statex \textbf{Input:} \(\ket{x}\ket{0}_{\mathrm c}\ket{0}_{\mathrm w}\).
\Statex \textbf{Output:} \(\ket{x}\ket{\widehat u_m(x)}_{\mathrm c}\ket{0}_{\mathrm w}\).
\State Coherently compute \(g_m\gets \partial_m f(x)\).
\State Set \(r\gets |g_m|/\lambda\), \(a_0\gets x_m-r\), and \(b_0\gets x_m+r\).
\State Set \(N\gets \left\lceil \log_2(1+2r\sqrt L)\right\rceil\).
\For{\(j=0,\dots,N-1\)}
    \State Compute \(c_j\gets (a_j+b_j)/2\) and \(h_j\gets \partial_u f_m(c_j;x_{-m})\).
    \State If \(h_j\le 0\), set \((a_{j+1},b_{j+1})\gets(c_j,b_j)\); otherwise set \((a_{j+1},b_{j+1})\gets(a_j,c_j)\).
    \State Uncompute \(h_j\) and temporary arithmetic registers.
\EndFor
\State Write \(\widehat u_m(x)\gets (a_N+b_N)/2\) into the center register \(\mathrm c\).
\State Uncompute all remaining workspace registers.
\end{algorithmic}
\end{algorithm}

Given the conditional center produced by \(\textsc{QModeLocate}_m\), the next procedure coherently prepares the one-dimensional conditional sample state by combining the rejection-envelope sampler with fixed-point amplification.

\begin{algorithm}[H]
\caption{\(\textsc{QCondSample}_m\)}
\label{alg:qcondsample-clean}
\begin{algorithmic}[1]
\Statex \textbf{Input:} \(\ket{x}\ket{0}_{\mathrm c}\ket{0}_{\mathrm{out}}\ket{0}_{\mathrm w}\), precision \(\eta\).
\Statex \textbf{Output:} \(\ket{x}\ket{0}_{\mathrm c}\int \sqrt{\pi_m(u\mid x_{-m})}\ket{u}_{\mathrm{out}}\,\d u\ket{0}_{\mathrm w}\), up to error \(\eta\).

\State Run \(\textsc{QModeLocate}_m\) from \Cref{alg:q-mode-locate} to write \(\widehat u_m(x)\) into the center register.
\State Using \(\widehat u_m(x)\), instantiate the normalized local oracle for \(\bar f_m(\cdot;x)\) from \Cref{prop:q-mode-location}.
\State Reversibly run the envelope construction of \Cref{lem:1d-envelope} with this normalized oracle, producing envelope data \(\Theta_{m,x}\) in workspace.
\State Let \(\widetilde p_{m,x}(v):=e^{-\bar f_m(v;x)}\), let \(\widetilde q_{m,x}\) be the envelope described by \(\Theta_{m,x}\), set \(q_{m,x}:=\widetilde q_{m,x}/\int \widetilde q_{m,x}\), and set \(\alpha_{m,x}(v):=\widetilde p_{m,x}(v)/\widetilde q_{m,x}(v)\).
\State Coherently implement the rejection-sampling unitary
\[
\ket{0}\ket{0}
\mapsto
\int \sqrt{q_{m,x}(v)}\,\ket{v}
\Bigl(
\sqrt{\alpha_{m,x}(v)}\ket{1}
+
\sqrt{1-\alpha_{m,x}(v)}\ket{0}
\Bigr)\,\d v .
\]

\State Apply coherent rejection by fixed-point amplification to obtain the accepted normalized sample state to error \(\eta\).
\State Apply the affine map \(u=\widehat u_m(x)+v/\sqrt{\lambda}\) to the sample register.
\State Uncompute all workspace registers, including the envelope data and the center register.
\end{algorithmic}
\end{algorithm}

The final procedure combines the coherent conditional samplers into Gibbs walks and applies the slowly varying quantum-walk state-conversion routine along the Gaussian cooling schedule.
\begin{algorithm}[H]
\caption{\(\textsc{QRandomScanGibbs}(\varepsilon)\)}
\label{alg:qrsgs}
\begin{algorithmic}[1]
\State Set the conditional-sampler precision \(\eta\gets \widetilde\Theta(\varepsilon/(\sqrt{\kappa}\,d))\).
\State Construct the Gaussian cooling schedule \(\pi_0,\pi_1,\dots,\pi_{s+1}=\pi\) from \Cref{lem:qgibbs-overlap}.
\State Prepare the initial Gaussian square-root state \(\ket{\pi_0}\).
\For{\(i=0,1,\dots,s+1\)}
    \State Apply \Cref{thm:qcondsample} to the stage-\(i\) potential to obtain \(U_{m,\eta}^{\mathrm{cond},i}\) for all \(m\in[d]\).
    \State Use \Cref{lem:qgibbs-update} to assemble the implemented update isometry \(\widetilde T_{\eta,i}^{\mathrm{GS}}\).
    \State Form \(\widetilde W_{\eta,i}^{\mathrm{GS}}:=S_{\mathrm{GS}}\bigl(2\widetilde T_{\eta,i}^{\mathrm{GS}}(\widetilde T_{\eta,i}^{\mathrm{GS}})^\dagger-I\bigr)\), and provide controlled access to \(\widetilde W_{\eta,i}^{\mathrm{GS}}\) and \((\widetilde W_{\eta,i}^{\mathrm{GS}})^\dagger\).
\EndFor
\State Apply \Cref{thm:slowly-varying-qwalk} to the implemented walk sequence, starting from \(\ket{\pi_0}\), with accuracy \(\Theta(\varepsilon)\).
\State Output the resulting state \(\ket{\widetilde\pi}\).
\end{algorithmic}
\end{algorithm}

\section{Proof details}
\label{app:proof-details}

We collect the proofs of the technical statements used in the main text.
The organization follows the construction: local conditional access, coherent conditional sampling, Gibbs-walk implementation, cooling, and the final sampler.

\subsection{Local conditional access and mode location}
\label{app:proof-mode-location}

We first justify that the one-dimensional conditional potentials can be evaluated using only incident local clauses, and then prove the mode-location guarantee used to normalize the conditional target.

\begin{proposition}[Local conditional oracle access]
\label{prop:local-conditional-access}
For each \(m\in[d]\), one evaluation of \(f_m(u;x_{-m})\) can be implemented using \(O(|D_m|)\) queries to the local clause oracle \(\cO_f^{\mathrm{loc}}\).
Likewise, one evaluation of \(\partial_u f_m(u;x_{-m})\) can be implemented using \(O(|D_m|)\) queries to the local clause-gradient oracle \(\cO_{\nabla f}^{\mathrm{loc}}\).
\end{proposition}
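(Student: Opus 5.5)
The plan is to write $f_m(u;x_{-m})$ and its derivative as sums over the incident clauses $a\in D_m$, and to evaluate each summand with one oracle call inside a reversible compute–accumulate–uncompute pattern. Throughout, I treat $D_m$, the sets $S_a$, and the position of $m$ inside $S_a$ as classical data fixed by the decomposition. The unitary for a fixed $m$ is then a fixed circuit with $|D_m|$ stages, and controlling on a register $\ket{m}$ only requires a classical lookup of $D_m$.

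\textbf{Evaluation.} Fix an input $\ket{u}\ket{x}\ket{z}$ and enumerate $D_m=\{a_1,\dots,a_{|D_m|}\}$. For each $a=a_\ell$ I proceed as follows.
\begin{enumerate}
\item Reversibly copy the coordinates of $(u,x_{-m})_{S_a}$ into a fresh workspace register $\ket{t}$, using $u$ in the slot corresponding to $m$ and $x_j$ for $j\in S_a\setminus\{m\}$. This step is only copying and uses no queries.
\item Prepare $\ket{a}$ classically.
\item Apply $\cO_f^{\mathrm{loc}}$ to $\ket{a,t,z}$, which adds $\psi_a(t)$ to the accumulator $z$.
\item Uncompute $\ket{t}$ and $\ket{a}$ by reversing the copy.
\end{enumerate}
Because the oracle leaves $\ket{a,t}$ unchanged and only adds to $z$, after all $|D_m|$ stages the accumulator holds $z+\sum_{a\in D_m}\psi_a((u,x_{-m})_{S_a})=z+f_m(u;x_{-m})$. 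The workspace returns to $\ket{0}$, so the map is a clean unitary and it uses exactly $|D_m|$ queries.

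\textbf{Derivative.} By the chain rule, $\partial_u f_m(u;x_{-m})=\sum_{a\in D_m}\bigl(\nabla\psi_a(t_a)\bigr)_{\iota_a(m)}$, where $t_a=(u,x_{-m})_{S_a}$ and $\iota_a(m)$ is the position of $m$ within $S_a$. For each $a\in D_m$:
\begin{enumerate}
\item Form $\ket{a,t_a}$ as before.
\item Call $\cO_{\nabla f}^{\mathrm{loc}}$ on a zeroed $k$-dimensional register $\ket{w=0}$ to obtain $\nabla\psi_a(t_a)$.
\item Add its $\iota_a(m)$-th entry to the scalar accumulator.
\item Call $\cO_{\nabla f}^{\mathrm{loc}}$ a second time to uncompute $w$. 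The oracle adds modulo the register's arithmetic, so I use its inverse $(\cO_{\nabla f}^{\mathrm{loc}})^\dagger$, which costs one query. If that is not allowed, I apply the oracle to a negated copy instead.
\item Uncompute $t_a$.
\end{enumerate}
This gives $2|D_m|=O(|D_m|)$ gradient queries with clean workspace.

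\textbf{Expected difficulty.} I do not expect a real obstacle. The one subtle point is bookkeeping: the gradient oracle returns a whole $k$-vector that is entangled with the workspace, and it must be uncomputed rather than discarded. That is why each derivative stage needs two queries instead of one. Clauses that do not involve $m$ are constant in $u$ and are never touched, and this is exactly where locality appears. A remark could also note that $x_m$ itself is unused when $u$ is supplied separately, and that setting $u=x_m$ recovers $\partial_m f(x)$. This last point is what Algorithm~\ref{alg:q-mode-locate} uses in its first line.
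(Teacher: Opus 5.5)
Your proposal is correct and follows the same route as the paper: write $f_m(u;x_{-m})$ and $\partial_u f_m(u;x_{-m})$ as sums over the incident clauses $a\in D_m$, query each clause once (taking the $\iota_a(m)$-th component of $\nabla\psi_a$ for the derivative), and accumulate with reversible arithmetic. Your explicit uncomputation of the $k$-dimensional gradient register, costing two queries per clause, is a bookkeeping step the paper leaves implicit; it changes only the constant in the $O(|D_m|)$ bound.
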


\begin{proof}
By definition,
\[
    f_m(u;x_{-m})
    =
    \sum_{a\in D_m}
        \psi_a\bigl((u,x_{-m})_{S_a}\bigr).
\]
Thus evaluating \(f_m(u;x_{-m})\) requires querying exactly the clauses \(a\in D_m\) and summing their values by reversible arithmetic.

For the derivative, let \(\iota_a(m)\) denote the position of coordinate \(m\) inside the local coordinate list \(S_a\).
Then
\[
    \partial_u f_m(u;x_{-m})
    =
    \sum_{a\in D_m}
        \partial_{\iota_a(m)}
        \psi_a\bigl((u,x_{-m})_{S_a}\bigr).
\]
Each summand is obtained from one query to the clause-gradient oracle, followed by extracting the component corresponding to \(m\).
This gives the claimed \(O(|D_m|)\) query bounds.
\end{proof}

We now prove the mode-location primitive.

\begin{proof}[Proof of \Cref{prop:q-mode-location}]
Fix \(m\in[d]\) and \(x\in\R^d\), and write
\[
    g(u):=\partial_u f_m(u;x_{-m}).
\]
Here \(\partial_m f(x)\) denotes the \(m\)-th coordinate of the full gradient \(\nabla f(x)\).
Since \(f_m(\cdot;x_{-m})\) contains exactly the clauses of \(f\) that depend on coordinate \(m\), while all non-incident clauses are independent of \(u\), we have
\[
    g(x_m)
    =
    \partial_u f_m(x_m;x_{-m})
    =
    \partial_m f(x).
\]

Because \(f_m(\cdot;x_{-m})\) is \(\lambda\)-strongly convex, its derivative \(g\) is \(\lambda\)-strongly monotone:
\[
    (g(u)-g(v))(u-v)\ge \lambda (u-v)^2
    \qquad\text{for all }u,v\in\R .
\]
Taking \(v=u_m^\star(x_{-m})\), where \(g(v)=0\), and then setting \(u=x_m\), gives
\[
    g(x_m)\bigl(x_m-u_m^\star(x_{-m})\bigr)
    \ge
    \lambda |x_m-u_m^\star(x_{-m})|^2 .
\]
If \(x_m=u_m^\star(x_{-m})\), the desired bound is trivial.
Otherwise, dividing by \(|x_m-u_m^\star(x_{-m})|\) and using \(g(x_m)=\partial_m f(x)\) yields
\[
    |x_m-u_m^\star(x_{-m})|
    \le
    \frac{|g(x_m)|}{\lambda}
    =
    \frac{|\partial_m f(x)|}{\lambda}.
\]
Thus the initial interval \([a_0,b_0]\) constructed in \Cref{alg:q-mode-locate}, with radius \(r=|\partial_m f(x)|/\lambda\), contains the conditional minimizer.

The derivative \(g(u)=\partial_u f_m(u;x_{-m})\) is monotone, so each bisection step preserves an interval containing the unique zero of \(g\), namely \(u_m^\star(x_{-m})\).
The initial width is \(2r\), and after \(N=\lceil\log_2(1+2r\sqrt L)\rceil\) bisection steps the width is at most
\[
    \frac{2r}{2^N}
    \le
    \frac{2r}{1+2r\sqrt L}
    \le
    L^{-1/2}.
\]
Hence the midpoint \(\widehat u_m(x)\) satisfies
\[
    |\widehat u_m(x)-u_m^\star(x_{-m})|
    \le
    L^{-1/2}.
\]
Each derivative evaluation costs \(O(|D_m|)\) local clause-gradient queries by \Cref{prop:local-conditional-access}, and the number of bisection steps is
$    O\left(
        \log\!\left(1+\frac{\sqrt L}{\lambda}|\partial_m f(x)|\right)
    \right).$
This proves the query bound for mode location.

It remains to check the normalized conditional potential.
The affine change of variables \(u=\widehat u_m(x)+v/\sqrt{\lambda}\) scales both the strong-convexity and smoothness parameters by \(1/\lambda\).
Since \(f_m(\cdot;x_{-m})\) is \(\lambda\)-strongly convex and \(L\)-smooth, \(\bar f_m(\cdot;x)\) is \(1\)-strongly convex and \(\kappa=L/\lambda\)-smooth.
Its minimizer is
\[
    v_m^\star(x)
    =
    \sqrt{\lambda}\bigl(u_m^\star(x_{-m})-\widehat u_m(x)\bigr),
\]
so
\[
    |v_m^\star(x)|
    \le
    \sqrt{\lambda}L^{-1/2}
    =
    \kappa^{-1/2}.
\]

Finally, evaluating \(\bar f_m(v;x)\) requires evaluating \(f_m\) at the two local points \(\widehat u_m(x)+v/\sqrt{\lambda}\) and \(\widehat u_m(x)\), while evaluating \(\partial_v\bar f_m(v;x)\) requires evaluating \(\partial_u f_m\) at \(\widehat u_m(x)+v/\sqrt{\lambda}\), followed by multiplication by \(\lambda^{-1/2}\).
The claimed \(O(|D_m|)\) query bounds follow again from \Cref{prop:local-conditional-access}.
\end{proof}

\subsection{Coherent conditional sampling}
\label{app:proof-conditional-sampler}

We next prove that the one-dimensional rejection-envelope sampler can be implemented coherently using only local queries.

\begin{lemma}[Coherent rejection by fixed-point amplification~\cite{YLC14}]
\label{lem:coherent-rejection}
Let \(A\) be a unitary of the form
\[
A\ket{0}\ket{0}
=
\int \sqrt{q(v)}\,\ket{v}
\Bigl(
\sqrt{\alpha(v)}\ket{1}
+
\sqrt{1-\alpha(v)}\ket{0}
\Bigr)\,\d v,
\]
where \(q\) is a probability density, \(0\le \alpha(v)\le 1\), and
\(p_{\mathrm{acc}}:=\int q(v)\alpha(v)\,\d v\ge p_0\).
Then, for any \(\eta\in(0,1)\), one can prepare a state within Euclidean distance \(\eta\) of
\[
\frac{1}{\sqrt{p_{\mathrm{acc}}}}
\int \sqrt{q(v)\alpha(v)}\,\ket{v}\,\d v
\]
using \(O(p_0^{-1/2}\log(1/\eta))\) applications of \(A\), \(A^\dagger\), and elementary reflections.
\end{lemma}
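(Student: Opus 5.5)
This is the standard fixed-point amplitude amplification guarantee of \cite{YLC14}, so the plan is to reduce to it: identify a ``good'' subspace and an initial state, check that its overlap with that subspace is large enough, and read off the output.

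\textbf{Setup.} Write \(\ket{s}:=A\ket{0}\ket{0}\), and let \(\Pi:=I\otimes\ket{1}\!\bra{1}\) be the projector onto the accepting flag. Decompose
\[
\ket{s}=\sqrt{p_{\mathrm{acc}}}\,\ket{g}\ket{1}+\sqrt{1-p_{\mathrm{acc}}}\,\ket{b}\ket{0},
\qquad
\ket{g}:=\frac{1}{\sqrt{p_{\mathrm{acc}}}}\int\sqrt{q(v)\alpha(v)}\,\ket{v}\,\d v .
\]
A direct computation gives \(\|\Pi\ket{s}\|^2=\int q\alpha=p_{\mathrm{acc}}\), and \(\ket{g}\) is exactly the target state. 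The only access the amplification needs is to \(A\) and \(A^\dagger\), together with controlled phase rotations about \(\ket{0}\ket{0}\) and about the flag qubit. Both rotations are elementary reflections.

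\textbf{Amplification.} Apply the Yoder--Low--Chuang fixed-point sequence of generalized Grover iterates \(A\,S_0(\beta_j)\,A^\dagger\,S_\Pi(\alpha_j)\), with the phases chosen from Chebyshev polynomials. For length \(L\ge \log(2/\delta)/\sqrt{p_0}\), this sequence guarantees that the final state \(\ket{s_L}\) satisfies \(\|\Pi\ket{s_L}\|^2\ge 1-\delta^2\) for every \(p_{\mathrm{acc}}\ge p_0\). The fixed-point property matters here: \(p_{\mathrm{acc}}\) depends on \(x\) and is unknown, so we cannot tune the number of rounds to it, but we can still act uniformly on all branches.

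\textbf{Identifying the output.} The iterates act inside the two-dimensional invariant subspace spanned by \(\ket{g}\ket{1}\) and \(\ket{b}\ket{0}\). Hence \(\ket{s_L}=c_1\ket{g}\ket{1}+c_0\ket{b}\ket{0}\) with \(|c_0|\le\delta\). We need to control the phase of \(c_1\) and then discard the flag.

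\begin{itemize}
\item If the flag is left in place, \(\|\ket{s_L}-e^{i\theta}\ket{g}\ket{1}\|\le O(\delta)\) for the resulting global phase \(\theta\).
\item If a deterministic phase is needed, it can be absorbed by a fixed phase gate. Alternatively, the YLC phases can be chosen so that \(c_1\) is real and positive.
\end{itemize}

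Finally, apply an \(X\) gate to reset the flag to \(\ket{0}\). Setting \(\delta=\Theta(\eta)\) gives total cost \(O(p_0^{-1/2}\log(1/\eta))\).

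\textbf{Main obstacle.} The step needing the most care is the uniformity over the coherent superposition over \(x\) (and \(m\)) in which this lemma is later used. The amplification must work simultaneously for all branches, each with its own \(p_{\mathrm{acc}}(x)\ge p_0\), while the relative phases across branches stay consistent. This is why the fixed-point version is essential rather than standard amplitude amplification. I would verify two things: that the YLC phase sequence depends only on \(p_0\) and \(\delta\), and that the residual phase of \(c_1\) is controlled, either bounded or independent of \(p_{\mathrm{acc}}\), up to error \(O(\delta)\) within the same invariant-subspace analysis.
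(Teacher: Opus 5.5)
Your reduction to \cite{YLC14} is what the paper does; it gives no argument beyond the citation. However, the point you flag at the end is a genuine gap, and neither of your remedies closes it.

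In the Yoder--Low--Chuang sequence, the phase of $c_1$ depends on $p_{\mathrm{acc}}$. Take $L=3$: one iterate $G(\alpha_1,\beta_1)=-S_s(\alpha_1)S_t(\beta_1)$ with $\beta_1=-\alpha_1$, and target $\ket{g}\ket{1}$. A direct computation gives the accepting amplitude $-\sqrt{p_{\mathrm{acc}}}\,\bigl[2w-1+p_{\mathrm{acc}}(1-w)^2\bigr]$, where $w=e^{-i\alpha_1}$. Since $(1-w)^2=-4\sin^2(\alpha_1/2)\,w$, the argument of this amplitude is independent of $p_{\mathrm{acc}}$ only if $e^{i\alpha_1}\in\R$. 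That happens only for the identity or ordinary Grover, and neither is fixed-point.

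Consequently, a fixed phase gate cannot remove the phase, because the circuit knows only $p_0$. Your alternative, ``choose the YLC phases so that $c_1$ is real and positive,'' does not hold for that sequence.

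What your argument establishes is the lemma up to a $p_{\mathrm{acc}}$-dependent phase. This is weaker than the stated Euclidean-distance bound, and it is not enough downstream. There $p_{\mathrm{acc}}$ varies with $(m,x)$, and even with $x_m$ through $\widehat u_m(x)$. A residual phase $e^{i\theta_m(x)}$ turns each summand $VP_mV^{-1}$ of the discriminant into $D_m^\dagger VP_mV^{-1}D_m$, with $m$-dependent diagonal unitaries $D_m$. The average then need not be similar to $VP^{\mathrm{GS}}V^{-1}$.

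The missing ingredient is a fixed-point amplification whose output amplitude is real, for example the QSVT version of Gily\'en, Su, Low, and Wiebe.

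\begin{itemize}
\item Take an odd real polynomial $P$ of degree $O(p_0^{-1/2}\log(1/\eta))$ with $|P|\le 1$ on $[-1,1]$ and $1-\eta^2/2\le P\le 1$ on $[\sqrt{p_0},1]$. This is a sign approximation.
\item The operator $\Pi A\,(\ket{0}\bra{0}\otimes\ket{0}\bra{0})$ has a single nonzero singular value $\sqrt{p_{\mathrm{acc}}}$, with left singular vector $\ket{g}\ket{1}$.
\item Apply singular value transformation with phases $\pm\Phi$, selected by an extra ancilla prepared in $\ket{+}$; this is the standard real-part construction. It makes the component in the range of $\ket{+}\bra{+}\otimes\Pi$ exactly $P(\sqrt{p_{\mathrm{acc}}})\ket{+}\ket{g}\ket{1}$, with a real amplitude.
\item The whole output is then within $\sqrt{2-2P(\sqrt{p_{\mathrm{acc}}})}\le\eta$ of $\ket{+}\ket{g}\ket{1}$. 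This holds for one fixed circuit, uniformly over all $p_{\mathrm{acc}}\ge p_0$, at the claimed cost $O(p_0^{-1/2}\log(1/\eta))$.
\item Resetting the ancilla and the flag gives the lemma as stated, and it remains valid coherently over $x$.
\end{itemize}
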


In the application, \(A\) will be the coherent rejection-sampling unitary associated with the normalized conditional target \(\widetilde p_{m,x}(v):=e^{-\bar f_m(v;x)}\).
The proposal density \(q_{m,x}\) is the normalized envelope produced by the deterministic construction of \Cref{lem:1d-envelope}, and the acceptance probability is \(\alpha_{m,x}(v)=\widetilde p_{m,x}(v)/\widetilde q_{m,x}(v)\). Now, we prove \Cref{thm:qcondsample}.
\begin{proof}[Proof of \Cref{thm:qcondsample}]
Fix \(m\in[d]\) and \(x\in\R^d\).
By \Cref{prop:q-mode-location}, \(\textsc{QModeLocate}_m\) computes a center \(\widehat u_m(x)\) with
\(|\widehat u_m(x)-u_m^\star(x_{-m})|\le L^{-1/2}\).
It also gives a normalized conditional potential \(\bar f_m(\cdot;x)\) that is \(1\)-strongly convex and \(\kappa\)-smooth, with minimizer within distance \(\kappa^{-1/2}\) of the origin.
Therefore \Cref{lem:1d-envelope} applies to \(\phi=\bar f_m(\cdot;x)\).

Let \(\widetilde p_{m,x}(v):=e^{-\bar f_m(v;x)}\).
The envelope construction produces finite data \(\Theta_{m,x}\) describing an envelope \(\widetilde q_{m,x}\), its normalized proposal \(q_{m,x}:=\widetilde q_{m,x}/\int \widetilde q_{m,x}\), and the acceptance function \(\alpha_{m,x}(v):=\widetilde p_{m,x}(v)/\widetilde q_{m,x}(v)\).
The construction is deterministic, so \(\Theta_{m,x}\) can be computed reversibly from coherent oracle access to \(\bar f_m(\cdot;x)\).
By \Cref{prop:q-mode-location}, each query to this normalized oracle costs \(O(|D_m|)\) local queries, and \Cref{lem:1d-envelope} uses \(O(\log\log\kappa)\) such queries.
Thus the envelope construction costs \(O(|D_m|\log\log\kappa)\) local queries.

Conditioned on \(\Theta_{m,x}\), we implement the rejection unitary
\[
A_{m,x}\ket{0}\ket{0}
=
\int \sqrt{q_{m,x}(v)}\,\ket{v}
\Bigl(
\sqrt{\alpha_{m,x}(v)}\ket{1}
+
\sqrt{1-\alpha_{m,x}(v)}\ket{0}
\Bigr)\,\d v .
\]
Since \(\int \widetilde q_{m,x}\le C\int \widetilde p_{m,x}\), the acceptance probability is at least \(1/C=\Omega(1)\).
By \Cref{lem:coherent-rejection}, fixed-point amplification prepares the accepted branch to Euclidean error at most \(\eta\) using \(O(\log(1/\eta))\) applications of \(A_{m,x}\) and \(A_{m,x}^{\dagger}\).
Each application evaluates the normalized target at the proposed point and hence costs \(O(|D_m|)\) local clause-oracle queries.
This contributes \(O(|D_m|\log(1/\eta))\) further local queries.

After conditioning on acceptance, the \(v\)-register is the square-root encoding of the density proportional to \(q_{m,x}(v)\alpha_{m,x}(v)\), which is the same density as \(\widetilde p_{m,x}(v)\) after normalization.
Applying the affine map \(u=\widehat u_m(x)+v/\sqrt{\lambda}\) therefore gives the square-root encoding of the original conditional law \(\pi_m(\cdot\mid x_{-m})\).
All auxiliary data, including \(\Theta_{m,x}\) and the center register, are deterministic functions of \(x\) and are uncomputed reversibly.

Combining the mode-location, envelope-construction, and coherent-rejection costs gives
\[
O\!\left(
|D_m|
\left[
\log\!\left(1+\frac{\sqrt L}{\lambda}|\partial_m f(x)|\right)
+
\log\log\kappa
+
\log(1/\eta)
\right]
\right)
\]
local queries.
By \Cref{lem:gradient-overhead-truncation} and \Cref{rem:gradient-overhead}, the gradient-dependent logarithm is polylogarithmic after the standard finite-precision truncation of the continuous register.
Thus the cost is \(\widetilde O(|D_m|)\).
\end{proof}

\begin{lemma}[Truncation controls the mode-location overhead]
\label{lem:gradient-overhead-truncation}
Let \(f:\R^d\to\R\) be \(L\)-smooth and \(\lambda\)-strongly convex, let
\(\pi(x)\propto e^{-f(x)}\), and let \(x^\star\) be the minimizer of \(f\).
For any \(\eta_{\rm tr}\in(0,1)\), define
$
    K_R:=\{x\in\R^d:\|x-x^\star\|\le R\}$, where
$R:=\sqrt{d/\parens{\lambda\eta_{\rm tr}}}$.
Then \(\pi(K_R)\ge 1-\eta_{\rm tr}\). Moreover, for every \(x\in K_R\) and every \(m\in[d]\),
$\log\!\left(1+\frac{\sqrt L}{\lambda}|\partial_m f(x)|\right)
\le
\log\!\left(1+\kappa^{3/2}\sqrt{\frac d{\eta_{\rm tr}}}\right)$.
In particular, if \(\eta_{\rm tr}^{-1}\) is polynomial in the problem parameters, this overhead is polylogarithmic.
\end{lemma}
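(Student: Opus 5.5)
The plan has two parts. First, show that \(K_R\) carries most of the mass of \(\pi\), via a second-moment bound and Markov's inequality. Second, bound the gradient coordinate on \(K_R\) using smoothness.

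For the mass bound, the standard fact is that for a \(\lambda\)-strongly log-concave \(\pi\) with mode \(x^\star\),
\[
\E_\pi\|x-x^\star\|^2\le d/\lambda .
\]
One way to see this is through the integration-by-parts identity
\[
\E_\pi\langle \nabla f(x),x-x^\star\rangle=d,
\]
which holds because boundary terms vanish under strong convexity. Strong monotonicity gives
\[
\langle \nabla f(x)-\nabla f(x^\star),x-x^\star\rangle\ge\lambda\|x-x^\star\|^2,
\]
and \(\nabla f(x^\star)=0\). Combining the two yields \(\lambda\,\E_\pi\|x-x^\star\|^2\le d\). Alternatively, one can cite Brascamp--Lieb or the Poincaré inequality, though that route controls the variance around the mean rather than around the mode. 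The integration-by-parts route avoids this issue, so I would use it. Markov's inequality then gives
\[
\pi(\|x-x^\star\|>R)\le \frac{d}{\lambda R^2}=\eta_{\rm tr},
\]
since \(R^2=d/(\lambda\eta_{\rm tr})\).

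For the gradient bound, take \(x\in K_R\). Since \(\nabla f(x^\star)=0\) and \(f\) is \(L\)-smooth,
\[
|\partial_m f(x)|\le\|\nabla f(x)\|\le L\|x-x^\star\|\le L R .
\]
Therefore
\[
\frac{\sqrt L}{\lambda}|\partial_m f(x)|\le \frac{L^{3/2}}{\lambda}\sqrt{\frac{d}{\lambda\eta_{\rm tr}}}=\frac{L^{3/2}}{\lambda^{3/2}}\sqrt{\frac d{\eta_{\rm tr}}}=\kappa^{3/2}\sqrt{\frac d{\eta_{\rm tr}}},
\]
and monotonicity of \(\log(1+\cdot)\) gives the claimed inequality. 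The final remark is then immediate: when \(\eta_{\rm tr}^{-1}\) is polynomial in the problem parameters, the bound is \(O(\log(d\kappa/\eta_{\rm tr}))\), which is polylogarithmic.

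The only step that needs care is the integration-by-parts identity in the mass bound. Its justification rests on \(e^{-f}\) decaying like a Gaussian, since \(f(x)\ge f(x^\star)+\tfrac\lambda2\|x-x^\star\|^2\). This decay makes \(e^{-f}\,\|x-x^\star\|\) integrable and makes the surface terms on large spheres vanish. Writing the identity as \(\int \nabla\cdot\bigl((x-x^\star)e^{-f}\bigr)\,\d x=0\) then handles it directly. All remaining steps are elementary.
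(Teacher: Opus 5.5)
Your proposal is correct and follows the paper's proof essentially line by line. For the mass bound, you combine the integration-by-parts identity $\E_\pi\langle \nabla f(X), X-x^\star\rangle = d$ with strong monotonicity and Markov's inequality, and for the overhead you use $|\partial_m f(x)|\le L\|x-x^\star\|\le LR$, which follows from smoothness and $\nabla f(x^\star)=0$. Your justification of the vanishing boundary terms via the Gaussian decay $f(x)\ge f(x^\star)+\tfrac{\lambda}{2}\|x-x^\star\|^2$ fills in a detail the paper leaves implicit.
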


\begin{proof}
We first bound the mass outside \(K_R\).
By integration by parts,
\[
    \E_\pi\!\left[\langle \nabla f(X),X-x^\star\rangle\right]=d.
\]
Since \(f\) is \(\lambda\)-strongly convex and \(\nabla f(x^\star)=0\), we have
\(\langle \nabla f(x),x-x^\star\rangle\ge \lambda\|x-x^\star\|^2\).
Therefore \(\E_\pi\|X-x^\star\|^2\le d/\lambda\).
Markov's inequality gives
\[
    \pi(K_R^c)
    =
    \Pr(\|X-x^\star\|>R)
    \le
    \frac{\E_\pi\|X-x^\star\|^2}{R^2}
    \le
    \eta_{\rm tr}.
\]

Now fix \(x\in K_R\).
By \(L\)-smoothness and \(\nabla f(x^\star)=0\),
\[
    |\partial_m f(x)|
    \le
    \|\nabla f(x)\|
    =
    \|\nabla f(x)-\nabla f(x^\star)\|
    \le
    L\|x-x^\star\|
    \le
    LR.
\]
Hence
\[
    \frac{\sqrt L}{\lambda}|\partial_m f(x)|
    \le
    \frac{L^{3/2}}{\lambda}
    \sqrt{\frac{d}{\lambda\eta_{\rm tr}}}
    =
    \kappa^{3/2}\sqrt{\frac d{\eta_{\rm tr}}},
\]
which proves the claim.
\end{proof}

\begin{remark}[Mode-location overhead in finite precision]
\label{rem:gradient-overhead}
In a finite-precision implementation, the continuous register is first truncated to a large ball \(K_R\) and then discretized on that ball.
The preceding lemma shows that, on such a truncation domain with \(\pi(K_R)\ge 1-\eta_{\rm tr}\), the gradient-dependent factor in \Cref{thm:qcondsample},
$    \log\!\left(1+\frac{\sqrt L}{\lambda}|\partial_m f(x)|\right)$,
is polylogarithmic whenever \(\eta_{\rm tr}^{-1}\) is polynomial in \(d,\kappa,1/\varepsilon\).
The same argument applies to the regularized potentials used in the Gaussian cooling schedule, with their corresponding smoothness and strong-convexity parameters.
Thus this factor is suppressed in the \(\widetilde O(|D_m|)\) local query bound.
\end{remark}

\subsection{Gibbs update isometry and phase gap}
\label{app:proof-gibbs-walk}

We now prove the two statements that connect the coherent conditional sampler to the Szegedy walk: the implementation of the update isometry and the phase-gap bound.

\subsubsection{Proof of the update isometry}
\label{app:proof-gibbs-isometry}

\begin{proof}[Proof of \Cref{lem:qgibbs-update}]
Let \(\mathcal E_m\) denote the isometric embedding that writes the scan label \(m\) into the proposal register.
Thus the ranges of \(\mathcal E_m\) and \(\mathcal E_{m'}\) are orthogonal whenever \(m\neq m'\).

For each coordinate \(m\), let \(B_{m,\star}\) be the ideal branch isometry that maps
\[
    \ket{x}
    \longmapsto
    \ket{x}
    \int \sqrt{\pi_m(u\mid x_{-m})}\,
    \ket{x^{(m\leftarrow u)}}\,\d u ,
\]
and let \(\widetilde B_{m,\eta}\) be the corresponding implemented branch obtained by using \(U_{m,\eta}^{\mathrm{cond}}\).
With this notation,
\[
    T_\star^{\mathrm{GS}}
    =
    \frac1{\sqrt d}
    \sum_{m=1}^d
    \mathcal E_m B_{m,\star},
    \qquad
    \widetilde T_{\eta}^{\mathrm{GS}}
    =
    \frac1{\sqrt d}
    \sum_{m=1}^d
    \mathcal E_m \widetilde B_{m,\eta}.
\]
The assumption on the conditional samplers gives
\(\|\widetilde B_{m,\eta}-B_{m,\star}\|_{\mathrm{op}}\le \eta\) for every \(m\).

For any input state \(\ket{\varphi}\), orthogonality of the scan-label subspaces gives
\[
\begin{aligned}
\left\|
(\widetilde T_{\eta}^{\mathrm{GS}}-T_\star^{\mathrm{GS}})
\ket{\varphi}
\right\|_2^2
&=
\frac1d
\sum_{m=1}^d
\left\|
\mathcal E_m
(\widetilde B_{m,\eta}-B_{m,\star})
\ket{\varphi}
\right\|_2^2  \\
&=
\frac1d
\sum_{m=1}^d
\left\|
(\widetilde B_{m,\eta}-B_{m,\star})
\ket{\varphi}
\right\|_2^2
\le
\eta^2 \|\ket{\varphi}\|_2^2 .
\end{aligned}
\]
Taking the supremum over unit vectors \(\ket{\varphi}\) yields
\(\|\widetilde T_{\eta}^{\mathrm{GS}}-T_\star^{\mathrm{GS}}\|_{\mathrm{op}}\le \eta\).

It remains to count queries.
The branch for coordinate \(m\) uses the conditional sampler \(U_{m,\eta}^{\mathrm{cond}}\), whose local query cost is \(\widetilde O(|D_m|)\).
The controlled implementation can pad all branches to the worst coordinate cost, giving one application of \(\widetilde T_{\eta}^{\mathrm{GS}}\) using \(\widetilde O(\max_m |D_m|)=\widetilde O(\Delta)\) local queries.
The same bound applies to \(\widetilde T_{\eta}^{\mathrm{GS}}{}^\dagger\) by reversing the circuit.
\end{proof}

\subsubsection{Proof of the phase-gap bound}
\label{app:proof-gibbs-spectrum}

\begin{proof}[Proof of \Cref{lem:qgibbs-spectrum}]
For each coordinate \(m\), define the ideal coordinate-update isometry
\[
T_m\ket{x}
=
\ket{x}\ket m
\int \sqrt{\pi_m(u\mid x_{-m})}\,
\ket{x^{(m\leftarrow u)}}\,\d u .
\]
Then \(T_\star^{\mathrm{GS}}=d^{-1/2}\sum_{m=1}^d T_m\).
The scan labels are orthogonal, so the discriminant
\(D_\star^{\mathrm{GS}}:=(T_\star^{\mathrm{GS}})^\dagger S_{\mathrm{GS}}T_\star^{\mathrm{GS}}\) decomposes as
\[
D_\star^{\mathrm{GS}}
=
\frac1d\sum_{m=1}^d T_m^\dagger S_{\mathrm{GS}}T_m .
\]
We identify each summand.
Let \(P_m\) be the coordinate-\(m\) Gibbs update.
For this update, \(P_m(x,\d y)\) is supported on states \(y\) with \(y_{-m}=x_{-m}\), and its one-dimensional transition density is \(\pi_m(y_m\mid x_{-m})\).
Thus \(T_m^\dagger S_{\mathrm{GS}}T_m\) has kernel
\[
\mathbf 1\{x_{-m}=y_{-m}\}
\sqrt{\pi_m(x_m\mid x_{-m})\pi_m(y_m\mid x_{-m})}.
\]
Equivalently, if \(V:L^2(\pi)\to L^2(\d x)\) is the unitary \(Vg=\sqrt{\pi}\,g\), then
\[
T_m^\dagger S_{\mathrm{GS}}T_m
=
V P_m V^{-1}.
\]
Averaging over \(m\) gives
\[
D_\star^{\mathrm{GS}}
=
V\left(\frac1d\sum_{m=1}^d P_m\right)V^{-1}
=
V P^{\mathrm{GS}} V^{-1}.
\]
Hence the discriminant has the same spectrum as the classical random-scan Gibbs kernel.

By the spectral mapping theorem for Szegedy walks~\cite[Theorem~3.1]{CCH23}, an eigenvalue \(\lambda\in[-1,1]\) of the discriminant gives eigenphases \(\pm\arccos(\lambda)\) of \(W_\star^{\mathrm{GS}}\).
The stationary eigenvalue \(\lambda=1\) gives the stationary eigenphase \(0\).
On the orthogonal complement of the stationary state, \(\lambda\le 1-\gap(P^{\mathrm{GS}})\), and therefore the smallest nonzero eigenphase is at least
\[
\arccos(1-\gap(P^{\mathrm{GS}}))
=
\Omega(\sqrt{\gap(P^{\mathrm{GS}})}).
\]
Using \Cref{lem:gibbs-classical-gap}, we obtain \(\gamma_{\mathrm{GS}}=\Omega(1/\sqrt{\kappa d})\).

It remains to compare the implemented and ideal walks.
Let \(\widetilde\Pi_{\eta}^{\mathrm{GS}}:=\widetilde T_{\eta}^{\mathrm{GS}}(\widetilde T_{\eta}^{\mathrm{GS}})^\dagger\).
If \(\|\widetilde T_{\eta}^{\mathrm{GS}}-T_\star^{\mathrm{GS}}\|_{\mathrm{op}}\le \eta\), then
\[
\|\widetilde\Pi_{\eta}^{\mathrm{GS}}-\Pi_\star^{\mathrm{GS}}\|_{\mathrm{op}}
\le
\|(\widetilde T_{\eta}^{\mathrm{GS}}-T_\star^{\mathrm{GS}})(\widetilde T_{\eta}^{\mathrm{GS}})^\dagger\|_{\mathrm{op}}
+
\|T_\star^{\mathrm{GS}}((\widetilde T_{\eta}^{\mathrm{GS}})^\dagger-(T_\star^{\mathrm{GS}})^\dagger)\|_{\mathrm{op}}
\le 2\eta .
\]
Therefore the corresponding reflections differ by at most \(4\eta\) in operator norm.
Multiplication by the same swap \(S_{\mathrm{GS}}\) preserves the norm, which proves \(\|\widetilde W_{\eta}^{\mathrm{GS}}-W_\star^{\mathrm{GS}}\|_{\mathrm{op}}=O(\eta)\).
\end{proof}

\subsection{Cooling schedule and final sampler}
\label{app:proof-main-sampler}

We analyze the Gaussian cooling schedule, recall the slowly varying quantum-walk state-conversion theorem, and then prove the main sampler and the warm-start refinement.

\subsubsection{Gaussian cooling estimates}
\label{app:proof-cooling}
\begin{proof}[Proof of \Cref{lem:qgibbs-overlap}]
As in the main text, translate coordinates so that \(x^\star=0\) and subtract the constant \(f(x^\star)\), so that \(f(0)=0\).
For \(1\le i\le s\), write
\[
    a_i:=\frac{1}{2\sigma_i^2},
    \qquad
    \pi_i(x)\propto e^{-f(x)-a_i\|x\|^2},
\]
and set \(a_{s+1}:=0\), so that \(\pi_{s+1}=\pi\).
Since \(\sigma_{i+1}^2=(1+1/\sqrt d)\sigma_i^2\), we have
\(a_{i+1}=a_i/(1+1/\sqrt d)\).

We first bound the overlap between \(\pi_0\) and \(\pi_1\).
The initial distribution is \(\pi_0=N(0,\sigma_1^2 I)\), with \(\sigma_1^2=1/(2Ld)\).
Since \(f(0)=0\) and \(f\) is \(L\)-smooth, \(0\le f(x)\le (L/2)\|x\|^2\).
Thus, for \(X\sim\pi_0\),
\[
|\langle \pi_0\mid \pi_1\rangle|
=
\frac{\E_{\pi_0}[e^{-f(X)/2}]}
     {\E_{\pi_0}[e^{-f(X)}]^{1/2}} .
\]
The denominator is at most one, while
\[
\E_{\pi_0}[e^{-f(X)/2}]
\ge
\E_{\pi_0}\!\left[e^{-L\|X\|^2/4}\right]
=
\left(1+\frac{1}{4d}\right)^{-d/2}
=
\Omega(1).
\]
Hence \(|\langle \pi_0\mid \pi_1\rangle|=\Omega(1)\).

We next control the overlaps among the regularized targets.
For \(a\ge 0\), define
\[
    Z(a):=\int_{\R^d} e^{-f(x)-a\|x\|^2}\,\d x,
    \qquad
    \Phi(a):=\log Z(a),
\]
and let \(\nu_a\) denote the probability measure with density proportional to
\(e^{-f(x)-a\|x\|^2}\).
For two parameters \(a,b\ge 0\), the square-root overlap is
\[
    |\langle \nu_a\mid\nu_b\rangle|
    =
    \frac{Z((a+b)/2)}{\sqrt{Z(a)Z(b)}}.
\]
Equivalently, defining
\[
    \Delta(a,b)
    :=
    \frac{\Phi(a)+\Phi(b)}2-\Phi\!\left(\frac{a+b}{2}\right),
\]
we have
$|\langle \nu_a\mid\nu_b\rangle|=e^{-\Delta(a,b)}$.

We now bound this midpoint convexity gap.
A direct differentiation gives
\(\Phi'(a)=-\E_a\|X\|^2\) and
\(\Phi''(a)=\operatorname{Var}_a(\|X\|^2)\), where \(X\sim\nu_a\).
The potential \(f(x)+a\|x\|^2\) is \((\lambda+2a)\)-strongly convex.
By the Poincar\'e inequality for strongly log-concave measures,
\[
    \Phi''(a)
    =
    \operatorname{Var}_a(\|X\|^2)
    \le
    \frac{1}{\lambda+2a}\,
    \E_a\|\nabla\|X\|^2\|^2.
\]
Since \(\nabla\|x\|^2=2x\) and \(\E_a\|X\|^2\le d/(\lambda+2a)\), we obtain
\[
    \Phi''(a)
    \le
    \frac{4d}{(\lambda+2a)^2}.
\]

For \(1\le i\le s-1\), take \(a=a_i\), \(b=a_{i+1}\), and let \(\delta:=1/\sqrt d\).
Then \(a=(1+\delta)b\) and \(a-b=\delta b\).
Taylor's theorem gives
\[
\Delta(a,b)
\le
\frac{(a-b)^2}{8}\sup_{t\in[b,a]}\Phi''(t)
\le
\frac{\delta^2b^2}{8}\cdot
\frac{4d}{(\lambda+2b)^2}
\le
\frac18,
\]
where the last step uses \(\delta^2 d=1\) and \(\lambda+2b\ge 2b\).
Therefore every intermediate overlap satisfies
\[
    |\langle \pi_i\mid\pi_{i+1}\rangle|
    =
    e^{-\Delta(a_i,a_{i+1})}
    \ge e^{-1/8}
    =
    \Omega(1).
\]

It remains to handle the final transition from \(\pi_s\) to \(\pi\).
Here \(a=a_s\) and \(b=0\).
By the choice \(\sigma_s^2=C d/\lambda\), we have \(a_s=\lambda/(2Cd)\).
Using the same Taylor bound,
\[
\Delta(a_s,0)
\le
\frac{a_s^2}{8}\sup_{t\in[0,a_s]}\Phi''(t)
\le
\frac{a_s^2}{8}\cdot \frac{4d}{\lambda^2}
=
O\!\left(\frac{1}{C^2d}\right).
\]
Consequently,
\[
    |\langle \pi_s\mid\pi\rangle|
    =
    e^{-\Delta(a_s,0)}
    \ge
    \exp\!\left(-O\!\left(\frac{1}{C^2d}\right)\right)
    =
    \Omega(1).
\]
This proves the constant-overlap claim for all consecutive stages.

The number of stages is determined by increasing \(\sigma_i^2\) by the factor \(1+1/\sqrt d\) from \(\sigma_1^2=1/(2Ld)\) until \(\sigma_s^2=C d/\lambda\).
Equivalently, \(a_i\) decreases from \(a_1=Ld\) to \(a_s=\Theta(\lambda/d)\), so
\[
    s
    =
    O\!\left(\sqrt d\log\frac{a_1}{a_s}\right)
    =
    O\!\left(\sqrt d\log(\kappa d)\right)
    =
    \widetilde O(\sqrt d).
\]

Finally, we prove the spectral and phase-gap bounds.
For \(1\le i\le s\), the stage-\(i\) potential is
\(f_i(x)=f(x)+a_i\|x\|^2\).
Its smoothness and strong-convexity parameters are \(L+2a_i\) and \(\lambda+2a_i\), respectively, so its condition number is at most \(\kappa=L/\lambda\).
The initial Gaussian stage has condition number \(1\), and the final stage has condition number \(\kappa\).
Applying \Cref{lem:gibbs-classical-gap} to each stage gives
\(\gap(P_i^{\mathrm{GS}})=\Omega(1/(\kappa d))\).
The phase-gap bound
\(\gamma_{\mathrm{GS},i}=\Omega(1/\sqrt{\kappa d})\)
then follows from \Cref{lem:qgibbs-spectrum}.
\end{proof}

\subsubsection{Slowly varying quantum walks}
\label{app:slowly-varying-qwalk}

We use the following standard quantum state-conversion theorem for slowly varying Markov chains.

\begin{theorem}[Quantum speedup for slowly varying Markov chains~{\cite[Theorem~2]{WA08}}]
\label{thm:slowly-varying-qwalk}
Let \(M_0,\dots,M_t\) be reversible Markov chains with stationary distributions \(\rho_0,\dots,\rho_t\), and let \(\ket{\rho_i}\) denote the square-root state of \(\rho_i\).
Suppose that the adjacent square-root states satisfy \(|\langle \rho_i\mid \rho_{i+1}\rangle|\ge p\) for all \(i=0,\dots,t-1\), and that the Szegedy walk associated with each \(M_i\) has phase gap at least \(\gamma\).
Given \(\ket{\rho_0}\), for any \(\varepsilon\in(0,1)\), there is a quantum algorithm that prepares a state \(\varepsilon\)-close in Euclidean norm to \(\ket{\rho_t}\) using \(\widetilde O(t/(p\gamma))\) controlled applications of the quantum walk operators associated with the \(M_i\)'s and their inverses.
\end{theorem}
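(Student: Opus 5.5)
The plan is to reduce the state conversion to a chain of amplitude-amplification steps, one for each adjacent pair \((\ket{\rho_i},\ket{\rho_{i+1}})\). Each step will use approximate reflections about the two states, and those reflections are built from phase estimation on the walk operators. The overall cost is then (number of transitions) \(\times\) (reflections per transition) \(\times\) (walk steps per reflection), which is \(t\cdot O(1/p)\cdot \widetilde O(1/\gamma)\).

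\emph{Step 1: approximate reflections.} For each \(i\), let \(T_i\) be the update isometry of \(M_i\) and \(W_i=S(2T_iT_i^\dagger-I)\) the Szegedy walk. By the spectral correspondence (as in \Cref{lem:qgibbs-spectrum}, via \cite[Theorem~3.1]{CCH23}), the discriminant \(T_i^\dagger S T_i\) is unitarily similar to \(M_i\). The only eigenvalue-\(1\) eigenvector of the discriminant is \(\ket{\rho_i}\), and it lifts to the phase-\(0\) eigenvector \(T_i\ket{\rho_i}\) of \(W_i\) inside the relevant invariant subspace. Every other eigenvector in the span of \(T_i\) and \(ST_i\) has eigenphase at least \(\gamma\) in absolute value.

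Given this, I would build \(\widetilde R_i\approx 2\ket{\rho_i}\!\bra{\rho_i}-I\) on the original register as follows:
\begin{itemize}
\item apply \(T_i\);
\item run phase estimation on \(W_i\) to precision \(\gamma/2\);
\item flip the sign unless the estimated phase is \(0\);
\item uncompute the phase estimation and then apply \(T_i^\dagger\).
\end{itemize}
With the Magniez--Nayak--Roland--Santha recursive trick (\(O(\log(1/\delta))\) parallel estimations followed by a majority vote), this gives operator-norm error \(\delta\) using \(O(\gamma^{-1}\log(1/\delta))\) controlled applications of \(W_i^{\pm1}\).

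\emph{Step 2: one transition.} Suppose we hold (approximately) \(\ket{\rho_i}\) and want \(\ket{\rho_{i+1}}\), with \(|\langle\rho_i\mid\rho_{i+1}\rangle|\ge p\). Both states span a two-dimensional invariant subspace for \(R_i\) and \(R_{i+1}\). Fixed-point amplitude amplification \cite{YLC14} in this subspace, using generalized reflections about the start state \(\ket{\rho_i}\) and the target \(\ket{\rho_{i+1}}\), then reaches \(\ket{\rho_{i+1}}\) to error \(\delta'\). This requires \(O(p^{-1}\log(1/\delta'))\) reflections. The phase-shifted reflections \(e^{i\phi}\)-versions are obtained in the same way, by applying the phase \(e^{i\phi}\) instead of \(-1\) on the zero-phase branch.

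An alternative for the constant-\(p\) regime used in \Cref{lem:qgibbs-overlap} is a Zeno-style sequence of randomized projections, each realized by phase estimation. I would use fixed-point amplification, because it is robust to an imperfect start state, which matters here since the input at each step is itself approximate.

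\emph{Step 3: error accounting.} Choose \(\delta=\delta'=\Theta(\varepsilon/t)\) and set the per-transition reflection error to \(\delta\cdot p\), so that the \(O(1/p)\) reflections in one transition contribute \(O(\delta)\) in total. Errors add in Euclidean norm because every operation is unitary. Hence the final state is within \(t\cdot O(\varepsilon/t)=O(\varepsilon)\) of \(\ket{\rho_t}\), and the total count is \(O\bigl(t\,p^{-1}\gamma^{-1}\log^2(t/(p\varepsilon))\bigr)=\widetilde O(t/(p\gamma))\).

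I expect the main obstacle to be Step 1. One has to check that phase estimation with precision \(\gamma\) really isolates \(T_i\ket{\rho_i}\) in the region reached by \(T_i\): components of \(T_i\ket{\psi}\) orthogonal to \(T_i\ket{\rho_i}\) must lie in eigenspaces with phase \(\ge\gamma\), including the \(-1\) discriminant eigenvalue and the continuous spectrum in the infinite-dimensional setting. One also has to check that conjugating by \(T_i\) yields a genuine reflection on the original register up to \(\delta\), without leaking amplitude outside the image of \(T_i\). I would handle this using the spectral theorem for the self-adjoint discriminant together with the invariant-subspace decomposition of \cite{CCH23}. The remaining steps are standard.
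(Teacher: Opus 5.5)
Your proposal is correct and is essentially the argument behind the cited result. The paper gives no proof of its own and imports the theorem from \cite{WA08}, whose proof likewise builds approximate reflections about each $\ket{\rho_i}$ from phase estimation on the walk operators, repeated to reach error $\delta$ at cost $O(\gamma^{-1}\log(1/\delta))$, and uses one fixed-point amplitude-amplification step per transition, with errors accumulated by a hybrid argument. The one real difference is that you use the Yoder--Low--Chuang search \cite{YLC14} rather than a Grover $\pi/3$-type recursion; this is exactly what gives the stated $1/p$ dependence on the unsquared overlap, since a $\pi/3$-type recursion needs $O(p^{-2}\log(1/\delta'))$ reflections per transition. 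Also, a fixed-point method is needed because $|\langle\rho_i\mid\rho_{i+1}\rangle|$ is only lower-bounded, so ordinary amplitude amplification cannot be tuned; robustness to an approximate input is not the reason, since every unitary procedure has it.
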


\subsubsection{Proof of the main sampler}
\label{app:proof-main-sampler-detail}

\begin{proof}[Proof of \Cref{thm:q-rsgs-main}]
By \Cref{lem:qgibbs-overlap}, the Gaussian cooling schedule has \(s+1=\widetilde O(\sqrt d)\) transitions, adjacent square-root states have constant overlap, and every stagewise Gibbs walk has phase gap \(\gamma_{\mathrm{GS},i}=\Omega(1/\sqrt{\kappa d})\).
Applying \Cref{thm:slowly-varying-qwalk} with \(t=s+1=\widetilde O(\sqrt d)\), \(p=\Omega(1)\), and \(\gamma=\Omega(1/\sqrt{\kappa d})\), the ideal state-conversion procedure uses \(\widetilde O(\sqrt d\cdot\sqrt{\kappa d})=\widetilde O(\sqrt{\kappa}\,d)\) controlled applications of the stagewise Gibbs walks.

We next bound the cost of one implemented walk application.
At stage \(i\), the potential is either the initial Gaussian potential, a regularized potential \(f_i(x)=f(x)+\|x\|^2/(2\sigma_i^2)\), or the final potential \(f\).
The added Gaussian term is coordinate-separable, so it adds only unary clauses and increases each occurrence number by at most one.
Hence, by \Cref{thm:qcondsample}, the coordinate-\(m\) conditional sampler at any stage has local query cost \(\widetilde O(|D_m|+1)\).
By \Cref{lem:qgibbs-update}, the corresponding update isometry \(\widetilde T_{\eta,i}^{\mathrm{GS}}\) costs \(\widetilde O(\Delta+1)\) local queries.
The walk \(\widetilde W_{\eta,i}^{\mathrm{GS}}=S_{\mathrm{GS}}(2\widetilde T_{\eta,i}^{\mathrm{GS}}(\widetilde T_{\eta,i}^{\mathrm{GS}})^\dagger-I)\) uses a constant number of calls to \(\widetilde T_{\eta,i}^{\mathrm{GS}}\) and its adjoint, plus query-free reversible operations.
Thus one controlled application of \(\widetilde W_{\eta,i}^{\mathrm{GS}}\), or its adjoint, costs \(\widetilde O(\Delta+1)\) local queries, which we write as \(\widetilde O(\Delta)\) for \(\Delta\ge 1\).
Multiplying by the number of walk applications gives total local query complexity \(\widetilde O(\Delta\sqrt{\kappa}\,d)\).

It remains to control the error from using implemented walks.
Let \(N=\widetilde O(\sqrt{\kappa}\,d)\) be the total number of controlled walk applications.
For every stage \(i\), \Cref{lem:qgibbs-update} gives \(\|\widetilde T_{\eta,i}^{\mathrm{GS}}-T_{\star,i}^{\mathrm{GS}}\|_{\mathrm{op}}\le \eta\), and \Cref{lem:qgibbs-spectrum} gives \(\|\widetilde W_{\eta,i}^{\mathrm{GS}}-W_{\star,i}^{\mathrm{GS}}\|_{\mathrm{op}}=O(\eta)\), with the same bound for the adjoints.
A standard hybrid argument over the \(N\) controlled walk applications bounds the accumulated implementation error by \(O(N\eta)\).
With \(\eta=\widetilde\Theta(\varepsilon/(\sqrt{\kappa}\,d))\), this contribution is \(O(\varepsilon)\).

The slowly varying procedure itself is run with final accuracy \(\Theta(\varepsilon)\).
Choosing the hidden constants in this accuracy and in \(\eta\) sufficiently small, the state-preparation error and the implementation error together give \(\|\ket{\widetilde\pi}-\ket{\pi}\|_2\le \varepsilon\).
Finally, if \(q\) is the distribution obtained by measuring \(\ket{\widetilde\pi}\), then \(\TV(q,\pi)\le \|\ket{\widetilde\pi}-\ket{\pi}\|_2\), since total variation between the squared amplitudes is bounded by the Euclidean distance between the corresponding nonnegative square-root states.
This proves the TV guarantee.
\end{proof}

\subsubsection{Warm start}
\label{app:proof-warm-start}
The cooling schedule is unnecessary when a warm initial quantum sample state is already available.

\begin{corollary}[Warm start]
\label{cor:warm-start}
Suppose we are given a square-root state
\(\ket{\mu_0}:=\int \sqrt{\mu_0(x)}\ket{x}\,\d x\), where \(\mu_0\) is \(\beta\)-warm with respect to \(\pi\), i.e.,
\(\mathrm d\mu_0/\mathrm d\pi\le \beta\).
Then the target square-root state \(\ket{\pi}\) can be prepared to Euclidean error \(\varepsilon\) using
\(\widetilde O(\Delta\sqrt{\beta\kappa d})\)
local queries.
In particular, for a constant-warm start, the complexity is
\(\widetilde O(\Delta\sqrt{\kappa d})\).
\end{corollary}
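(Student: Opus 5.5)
The plan is to skip the cooling schedule. We run a single phase-estimation-based projection onto the stationary eigenvector of the implemented Gibbs walk for \(\pi\), then amplify. The projection is carried out as amplitude amplification.

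\emph{Overlap.} Embed \(\ket{\mu_0}\) in the walk space as \(T_\star^{\mathrm{GS}}\ket{\mu_0}\); this costs one update isometry, i.e.\ \(\widetilde O(\Delta)\) local queries by \Cref{lem:qgibbs-update}. The stationary eigenvector of \(W_\star^{\mathrm{GS}}\) is \(T_\star^{\mathrm{GS}}\ket{\pi}\). Since \(T_\star^{\mathrm{GS}}\) is an isometry, the overlap equals \(\langle\pi\mid\mu_0\rangle=\E_\pi[\sqrt{\d\mu_0/\d\pi}]\). Writing \(g=\d\mu_0/\d\pi\le\beta\), we have \(\E_\pi g=1\) and \(\sqrt g\ge g/\sqrt\beta\). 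This gives \(\langle\pi\mid\mu_0\rangle\ge 1/\sqrt\beta\).

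\emph{Reflection about the stationary state.} By \Cref{lem:qgibbs-spectrum} with \Cref{lem:gibbs-classical-gap}, every other eigenphase of \(W_\star^{\mathrm{GS}}\) is at least \(\gamma_{\mathrm{GS}}=\Omega(1/\sqrt{\kappa d})\). We use the standard phase-estimation approximate reflection of Magniez–Nayak–Roland–Santha type about \(T_\star^{\mathrm{GS}}\ket\pi\), or equivalently the one-step case \(t=1\) of \Cref{thm:slowly-varying-qwalk} with \(\rho_0=\mu_0\), \(\rho_1=\pi\), \(p\ge 1/\sqrt\beta\). To accuracy \(\delta\), this uses \(O(\gamma_{\mathrm{GS}}^{-1}\log(1/\delta))=\widetilde O(\sqrt{\kappa d})\) controlled walk steps.

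\emph{Amplification and cost.} Fixed-point or standard amplitude amplification (\Cref{lem:coherent-rejection}) between the reflection about the initial state and the approximate reflection needs \(O(\sqrt\beta\log(1/\varepsilon))\) rounds. This yields \(\widetilde O(\sqrt{\beta\kappa d})\) walk applications. Each walk application costs \(\widetilde O(\Delta)\) local queries (\Cref{lem:qgibbs-update}), giving \(\widetilde O(\Delta\sqrt{\beta\kappa d})\) in total. At the end we apply \((T_\star^{\mathrm{GS}})^\dagger\), or simply discard the label and proposal registers, to return to \(\ket\pi\).

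\emph{Error budget.} This is the only delicate step. Using implemented walks introduces \(O(\eta)\) operator error per application (\Cref{lem:qgibbs-spectrum}). The reflection-state preparation also needs \(\widetilde T\) rather than \(T_\star\), adding \(O(\eta)\). A hybrid argument over \(N=\widetilde O(\sqrt{\beta\kappa d})\) applications bounds the total implementation error by \(O(N\eta)\), so we choose \(\eta=\widetilde\Theta(\varepsilon/N)\). Because \(\eta\) enters the per-step cost only logarithmically (\Cref{thm:qcondsample}), the claimed bound holds. The per-reflection phase-estimation error \(\delta\) is set to \(\varepsilon/\sqrt\beta\), which likewise costs only a logarithmic factor. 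The main thing to check is that the truncation and mode-location overhead (\Cref{lem:gradient-overhead-truncation}) remains polylogarithmic in this setting; it does, since the domain truncation is the same as in \Cref{thm:q-rsgs-main}. For constant \(\beta\), the bound is \(\widetilde O(\Delta\sqrt{\kappa d})\).
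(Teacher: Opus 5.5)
Your proposal is correct and follows the paper's proof. Warmness gives $\langle\mu_0|\pi\rangle\ge\beta^{-1/2}$; your step $\sqrt g\ge g/\sqrt\beta$ supplies the justification the paper omits. The Gibbs walk has phase gap $\Omega(1/\sqrt{\kappa d})$, and a single state-conversion step costs $\widetilde O(\sqrt\beta\,\gamma_{\mathrm{GS}}^{-1})$ walk applications at $\widetilde O(\Delta)$ local queries each, with $\eta$ chosen inversely to that count. Your phase-estimation reflection plus amplitude amplification is just the $t=1$ case of \Cref{thm:slowly-varying-qwalk} unpacked, which is exactly what the paper invokes. Two small caveats: both arguments tacitly need a preparation unitary for $\ket{\mu_0}$ in order to reflect about it, and only the option of uncomputing with $(\widetilde T^{\mathrm{GS}})^\dagger$ returns the pure state $\ket{\pi}$, whereas discarding the label and proposal registers leaves a mixed state.
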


\begin{proof}[Proof of \Cref{cor:warm-start}]
The warmness condition implies a lower bound on the initial overlap:
\[
\langle \mu_0|\pi\rangle
=
\int \sqrt{\frac{\mathrm d\mu_0}{\mathrm d\pi}}\,\mathrm d\pi
\ge
\frac1{\sqrt{\beta}}.
\]
By \Cref{lem:qgibbs-spectrum}, the Gibbs quantum walk for the target has phase gap
\(\gamma_{\mathrm{GS}}=\Omega(1/\sqrt{\kappa d})\).
Applying \Cref{thm:slowly-varying-qwalk} with no cooling schedule, initial overlap \(p\ge \beta^{-1/2}\), and phase gap \(\gamma_{\mathrm{GS}}\), we can prepare \(\ket{\pi}\) using
\(\widetilde O(p^{-1}\gamma_{\mathrm{GS}}^{-1})=\widetilde O(\sqrt{\beta\kappa d})\)
walk applications.
Each walk application costs \(\widetilde O(\Delta)\) local queries by \Cref{lem:qgibbs-update}.
The implementation error is handled as in the proof of \Cref{thm:q-rsgs-main}, by choosing the conditional-sampler precision inversely proportional to the total number of walk applications.
This gives the stated \(\widetilde O(\Delta\sqrt{\beta\kappa d})\) local query complexity.
\end{proof}

\end{document}